\def\Z {\hbox{\Sets Z}}
\def\N {\hbox{\Sets N}}
\def\dfrac#1#2{\frac{\partial #1}{\partial #2}}
\newcounter{transf}
\renewcommand{\imath}{\mathrm{i}}
\DeclareMathOperator{\Id}{Id}
\newcounter{meq}
\font\Sets=msbm10
\newcounter{classcount}
\newcounter{listcount}
\renewcommand*\thelistcount{\arabic{listcount}}
\newcounter{eqlist}
\newcounter{eqlistI}
\renewcommand*\theeqlistI{E.\arabic{eqlistI}}
\newcounter{eqlistII}
\renewcommand*\theeqlistII{E.\arabic{eqlistII}\ensuremath{^\prime}}
\theoremstyle{plain}
\newtheorem{theorem}{Theorem}
\theoremstyle{definition}
\theoremstyle{remark}
\newtheorem{remark}{Remark}
\renewcommand{\pdv}[2]{\begingroup 
\@tempswafalse\toks@={}\count@=\z@ 
\@for\next:=#2\do 
{\expandafter\check@var\next\@nil
 \advance\count@\der@exp 
 \if@tempswa 
   \toks@=\expandafter{\the\toks@\,}% 
 \else 
   \@tempswatrue 
 \fi 
 \toks@=\expandafter{\the\expandafter\toks@\expandafter\partial\der@var}}% 
\frac{\partial\ifnum\count@=\@ne\else^{\number\count@}\fi#1}{\the\toks@}% 
\endgroup} 
\def\check@var{\@ifstar{\mult@var}{\one@var}} 
\def\mult@var#1#2\@nil{\def\der@var{#2^{#1}}\def\der@exp{#1}} 
\def\one@var#1\@nil{\def\der@var{#1}\chardef\der@exp\@ne} 
\title{Integrable discrete autonomous quad-equations admitting, as generalized 
symmetries, known five-point differential-difference equations}
\author{R. N. Garifullin\textsuperscript{1} \and G. Gubbiotti\textsuperscript{2} \and R. I. Yamilov\textsuperscript{1}\\
}
\date{
\textsuperscript{1}Institute of Mathematics, Ufa Federal Research Centre,\\
Russian Academy of Sciences,\\ 
112 Chernyshevsky Street, Ufa 450008, Russian Federation\\
E-mail: \texttt{rustem@matem.anrb.ru}, \texttt{RvlYamilov@matem.anrb.ru}\\
\textsuperscript{2}School of Mathematics and Statistics, F07, The University of Sydney,
\\
New South Wales 2006, Australia
\\
E-mail: \texttt{giorgio.gubbiotti@sydney.edu.au}
}
\begin{document}

\maketitle

\begin{abstract}
    In this paper we construct the autonomous quad-equations which admit
    as symmetries the five-point differential-difference equations
    belonging to known lists found by Garifullin, Yamilov and Levi.
    The obtained equations are classified up to autonomous point
    transformations and some simple non-autonomous transformations.
    We discuss our results in the framework of the known literature.
    There are among them a few new examples of both sine-Gordon and Liouville type
    equations.
\end{abstract}

\section{Introduction}
\label{sec:intro}

In this paper we consider in discrete \emph{quad-equations}.
Quad-equations are four-point relations of the form:
\begin{equation}
    F \left( u_{n,m},u_{n+1,m},u_{n,m+1},u_{n+1,m+1} \right)=0,
    \label{eq:quadgen}
\end{equation}
for an unknown field $u_{n,m}$ depending on two discrete variables,
i.e. $\left( n,m \right)\in\Z^{2}$.
The function $F=F(x,y,z,w)$ is assumed to be an 
\emph{irreducible multi-affine polynomial} \cite{ABS2003}.
We recall that a polynomial is said to be multi-affine if it is
affine with respect to all its variable, and it is said to be irreducible if
it only admits trivial factors, i.e. complex constants.
Moreover, by omitting the subscript $n,m$ we underline that it does
not depend explicitly on the discrete variables, i.e. the equation
\eqref{eq:quadgen} is assumed to be \emph{autonomous}.

It is known \cite{Garifullin2012,LeviPetreraScimiterna2008,LeviYamilov2009,LeviYamilov2011,MikhailovXenitidis2013}
that the \emph{generalized symmetries} of quad-equations are given
by differential-difference equations of the form:
\begin{subequations}
    \begin{align}
        \dv{u_{n,m}}{t_{1}} &= \varphi_{n,m}\left( u_{n+k_{1},m},\dots,u_{n-k'_{1},m} \right),
        \label{eq:symm1}
        \\
        \dv{u_{n,m}}{t_{2}} &= \psi_{n,m} \left( u_{n,m+k_{2}},\dots,u_{n,m-k'_{2}} \right),
        \label{eq:symm2}
    \end{align}
    \label{eq:symmquad}
\end{subequations}
where $k_{i},k'_{i}\in\N$.
This means that one generalized symmetries depends only on shifts in the $n$
direction, while the other depends only on shifts in the $m$ direction.
We note that usually differential-difference equations are also addressed
by the amount of points of the lattice that they are involving.
In this sense the equations in \eqref{eq:symmquad} are a $k_{1}+k'_{1}+1$-point
differential-difference equation and a $k_{2}+k'_{2}+1$-point differential-difference
equation respectively.

The vast majority of quad-equations known in literature admits as generalized
symmetries three-point differential-difference equations in both directions
\cite{LeviPetreraScimiterna2008,LeviYamilov2011,Xenitidis2009proc,Xenitidis2009,GSL_symmetries,Hydon2007}.
When the given quad-equation admits three-point generalized symmetries
it can be interpreted as a B\"acklund transformation for these 
differential-difference equantions \cite{Levi1981,LeviPetreraScimiterna2008}.
More recently, several examples with more complicated symmetry structure
have been discovered \cite{Adler2011,GarifullinYamilov2012,MikhailovXenitidis2013,ScimiternaHayLevi2014,ghy15}.

In the aforementioned paper the problem which was addressed
was to find the generalized symmetries of a given quad-equation.
If the generalized symmetries are given by integrable differential-difference
equations, then one can construct a whole hierarchy of generalized symmetries
for the given quad-equation which is integrable according to the generalized symmetries
method \cite{Yamilov2006}.
On the other hand, also the converse problem can be considered: fixed a
differential-difference equation find the quad-equation admitting it
as a generalized symmetry.
This point of view was taken in \cite{LeviYamilov2009},
using one well-known five-point autonomous differential-difference
equation.
In \cite{GarifullinYamilov2015} a similar analysis was carried out
for a known class of integrable non-autonomous Volterra and Toda three-point
differential-difference equations \cite{LeviYamilov1997}.

In this paper we are going to generalize the results of 
\cite{LeviYamilov2009, GarifullinYamilov2015}:
we will start from a known integrable \emph{autonomous five-point}
differential-differential equation and construct the quad-equation 
admitting it as generalized symmetry in the $n$ direction, i.e.
of the form \eqref{eq:symm1}.
To this end we will use the classification of the five-point
differential-difference equations given in 
\cite{GarifullinYamilovLevi2016,GarifullinYamilovLevi2018}.
The equations belonging to the classification in
\cite{GarifullinYamilovLevi2016,GarifullinYamilovLevi2018}
have the following form:
\begin{equation}
    \begin{aligned}
        \dv{u_{k}}{t} &= A\left( u_{k+1},u_{k},u_{k-1} \right)u_{k+2}
    +B\left( u_{k+1},u_{k},u_{k-1} \right)u_{k-2}
    \\
    &+C\left( u_{k+1},u_{k},u_{k-1} \right),
    \end{aligned}
    \label{eq:4thgen}
\end{equation}
i.e. are linear in $u_{k\pm2}$.
These equations are divided in two classes, which we now describe explicitly.
The first class, Class I was presented in \cite{GarifullinYamilovLevi2016}
and contains all the equations of the form \eqref{eq:4thgen} such
that the condition:
\begin{equation}
   A\neq \alpha\left( u_{k+1},u_{k} \right)\alpha\left( u_{k},u_{k-1} \right),
   \quad
   B\neq \beta\left( u_{k+1},u_{k} \right)\left( u_{k},u_{k-1} \right),
   \label{eq:condAB}
\end{equation}
holds true for any functions $\alpha,\ \beta$. This Class contains seventeen equations.
Equations belonging to this class are denoted by (E.x),
where x is an Arabic number.
The second class, Class II, was presented in \cite{GarifullinYamilovLevi2018}
and contains all the equations of the form \eqref{eq:4thgen} such
that the condition \eqref{eq:condAB} does not hold, i.e. there exist functions $\alpha,\ \beta$ such that
\begin{equation}
   A= \alpha\left( u_{k+1},u_{k} \right)\alpha\left( u_{k},u_{k-1} \right),
   \quad
   B= \beta\left( u_{k+1},u_{k} \right)\left( u_{k},u_{k-1} \right).
   \label{eq:condAB2}
\end{equation}
This class contains fourteen equations.
Equations belonging to this class are denoted by (E.x\ensuremath{^\prime}),
where x is an Arabic number.
For easier understanding of the results, we split the complete list
into smaller Lists 1-6. In each List the equations are related to each
other by autonomous non-invertible and non-point transformations or simple non-autonomous point transformations.
Moreover, for sake of simplicity, since the equations are autonomous, in
displaying the equations we will use the shorthand notation
$u_{i}=u_{k+i}$.

\setcounter{eqlistI}{0}
\setcounter{eqlistII}{0}
\begin{description}[%
  before={\setcounter{listcount}{0}},%
  ,font=\bfseries\stepcounter{listcount}List \thelistcount.~]
  \item Equations related to the double Volterra equation:
       \begin{align}
            \dv{u_0}{t}&=u_0(u_{2}-u_{-2}),
            \stepcounter{eqlistI}
            \tag{\theeqlistI}
            \label{Vol} 
            \\
            \dv{u_0}{t}&=u_0^2(u_{2}-u_{-2}),
            \stepcounter{eqlistI}
            \tag{\theeqlistI}
            \label{Vol0}
            \\
            \dv{u_0}{t}&=(u_0^2+u_0)(u_{2}-u_{-2}),
            \stepcounter{eqlistI}
            \tag{\theeqlistI}
            \label{Vol1}
            \\
            \dv{u_0}{t}&=(u_{2}+u_{1})(u_0+u_{-1})-(u_1+u_0)(u_{-1}+u_{-2}),
            \stepcounter{eqlistI}
            \tag{\theeqlistI}
            \label{Vol_mod}
            \\
            \dv{u_0}{t}&
            \begin{aligned}[t]
                &=(u_{2}-u_{1}+a)(u_0-u_{-1}+a)
                \\
                &+(u_1-u_0+a)(u_{-1}-u_{-2}+a)+b,
            \end{aligned}
            \stepcounter{eqlistI}
            \tag{\theeqlistI}
            \label{Vol_mod1}
            \\
            \dv{u_0}{t}&
            \begin{aligned}[t]
                &=u_{2}u_{1}u_0(u_0u_{-1}+1)\\
                &-(u_1u_0+1)u_0u_{-1}u_{-2}+u_0^2(u_{-1}-u_{1}),
            \end{aligned}
            \stepcounter{eqlistI}
            \tag{\theeqlistI}
            \label{Vol2}
            \\
           \dv{u_0}{t}&=u_{{0}} \left[u_1(u_2-u_0)+u_{-1}(u_0-u_{-2}) \right],
            \stepcounter{eqlistII}
            \tag{\theeqlistII}
            \label{eq1ii}
            \\
            \dv{u_0}{t}&=u_{{1}}{u_{{0}}}^{2}u_{-1} \left( u_{{2}}-u_{{-2}} \right).
            \stepcounter{eqlistII}
            \tag{\theeqlistII}
            \label{eq2ii}
       \end{align}
        Transformations $\tilde u_k=u_{2k}$ or $\tilde u_k=u_{2k+1}$ turn equations 
        \eqref{Vol}-\eqref{Vol1} into the well-known Volterra equation and its modifications 
    in their standard form.
       The other equations are related to the \emph{double
       Volterra equation} \eqref{Vol} through some autonomous 
        non-invertible non-point transformations. 
        We note that equation \eqref{eq2ii} was presented in 
        \cite{AdlerPostnikov2008}.
   \item Linearizable equations:
       \begin{align}
           \dv{u_0}{t}&
           \begin{aligned}[t]
               &=(T-a)
           \left[\frac{(u_1+au_0+b)(u_{-1}+au_{-2}+b)}{u_{0}+au_{-1}+b}+u_0+au_{-1}+b\right]
           \\
           &+cu_0+d,
           \end{aligned}
            \stepcounter{eqlistI}
            \tag{\theeqlistI}
            \label{Bur2}
            \\
            \dv{u_0}{t}&=\frac{u_2u_0}{u_1}+u_1-a^2\left(u_{-1}+\frac{u_0u_{-2}}{u_{-1}}\right)+cu_0.
            \stepcounter{eqlistI}
            \tag{\theeqlistI}
            \label{Bur}
       \end{align}
       In both equations $ a\neq0,$ in \eqref{Bur2} $(a+1)d=bc$, and $T$ is the translation
       operator $Tf_{n}=f_{n+1}$.

       Both equations of List \thelistcount\ are related to the linear equation:
       \begin{equation}
       \label{lin_eq} 
       \dv{u_0}{t}=u_2-a^2 u_{-2}+\frac{c}{2}u_0 
        \end{equation}
        through an autonomous non-invertible non-point transformations.
        We note that \eqref{Bur2} is linked to \eqref{lin_eq}
        with a transformation which is implicit in both directions,
        see \cite{GarifullinYamilovLevi2016} for more details.
    \item Equations related to a generalized symmetry of the  Volterra equation:
        \begin{align}
            \dv{u_0}{t}&
            \begin{aligned}[t]
                &=u_{{0}} 
                \left[u_1(u_2+u_1+u_0)-u_{-1}(u_0+u_{-1}+u_{-2}) \right]
                \\
                &+cu_{{0}} \left( u_{{1}}-u_{{-1}} \right),
            \end{aligned}
            \stepcounter{eqlistII}
            \tag{\theeqlistII}
            \label{Vol1s}
            \\
            \dv{u_0}{t} &
            \begin{aligned}[t]
                &= (u_{{0}}^2-a^2) 
            \left[(u_1^2-a^2)(u_2+u_0)-(u_{-1}^2-a^2)(u_0+u_{-2}) \right]
            \\
            &+c(u_{{0}}^2-a^2) \left( u_{{1}}-u_{{-1}} \right),
            \end{aligned}
            \stepcounter{eqlistII}
            \tag{\theeqlistII}
            \label{mVol2}
            \\
            \dv{u_0}{t}&=(u_1-u_0+a)(u_0-u_{-1}+a)(u_2-u_{-2}+4a+c)+b,
            \stepcounter{eqlistII}
            \tag{\theeqlistII}
            \label{Volz}
            \\
            \dv{u_0}{t}&=u_0[u_1(u_2-u_1+u_0)-u_{-1}(u_0-u_{-1}+u_{-2})],
            \stepcounter{eqlistII}
            \tag{\theeqlistII}
            \label{Vol_mod1s}
            \\
            \dv{u_0}{t}&=(u_{{0}}^2-a^2) \left[(u_1^2-a^2)(u_2-u_0)+(u_{-1}^2-a^2)(u_0-u_{-2}) \right],
            \stepcounter{eqlistII}
            \tag{\theeqlistII}
            \label{mVol3}
            \\
            \dv{u_0}{t}&=(u_1+u_0)(u_0+u_{-1})(u_2-u_{-2}).
            \stepcounter{eqlistII}
            \tag{\theeqlistII}
            \label{Vol_mod2}
        \end{align}
        These equations are related between themselves by some transformations, 
        for more details see \cite{GarifullinYamilovLevi2018}.
        Moreover equations (\ref{Vol1s},\ref{mVol2},\ref{Volz}) are the generalized symmetries of some known
        three-point autonomous 	differential-difference equations \cite{Yamilov2006}.
    \item  Equations of the relativistic Toda type:
        \begin{align}
            \dv{u_0}{t}&=(u_0-1)\left(\frac{u_2(u_1-1)u_0}{u_1}-\frac{u_0(u_{-1}-1)u_{-2}}{u_{-1}}-u_1+u_{-1}\right),
            \stepcounter{eqlistI}
            \tag{\theeqlistI}
        \label{our1}
        \\
        \dv{u_0}{t}&
        \begin{aligned}[t]
        &=\frac{u_2u_1^2u_0^2(u_0u_{-1}+1)}{u_1u_0+1}-\frac{(u_1u_0+1)u_0^2u_{-1}^2u_{-2}}{u_0u_{-1}+1}
        \\
        &-\frac{(u_1-u_{-1})(2u_1u_0u_{-1}+u_1+u_{-1})u_0^3}{(u_1u_0+1)(u_0u_{-1}+1)},
        \end{aligned}
            \stepcounter{eqlistI}
            \tag{\theeqlistI}
        \label{our2}
        \\
            \dv{u_0}{t}&=(u_1u_0-1)(u_0u_{-1}-1)(u_2-u_{-2}).
            \setcounter{eqlistII}{13}
            %\stepcounter{eqlistII}
            \tag{\theeqlistII} 
            \label{ourii}
        \end{align}
        Equation \eqref{ourii} was known 
        \cite{GarifullinYamilov2012,GarifullinMikhailovYamilov2014} to be
        is a relativistic Toda type equation.
        Since in \cite{GarifullinYamilovLevi2016} it was shown that the equations of List 
        \thelistcount\ are related through autonomous non-invertible non-point transformations,
        it was suggested that \eqref{our1} and \eqref{our2} should be of the same type.
        Finally, we note that equation \eqref{our1} appeared in \cite{Adler2016b} earlier than in 
        \cite{GarifullinYamilovLevi2016}.
        \setcounter{eqlistII}{8}
    \item Equations related to the Itoh-Narita-Bogoyavlensky (INB) equation:
        \begin{align}
            \label{INB}
            \dv{u_0}{t}&=u_0(u_2+u_1-u_{-1}-u_{-2}),
            \stepcounter{eqlistI}
            \tag{\theeqlistI}
            \\
            \stepcounter{eqlistI}
            \tag{\theeqlistI}
            \label{mod_INB}
            \dv{u_0}{t}&
            \begin{aligned}[t]
                &=(u_{2}-u_{1}+a)(u_0-u_{-1}+a)
                \\
                &+(u_1-u_0+a)(u_{-1}-u_{-2}+a)
                \\
                &+(u_1-u_0+a)(u_0-u_{-1}+a)+b,
            \end{aligned}
            \\
            \dv{u_0}{t} &= (u_0^2+au_0)(u_2u_1-u_{-1}u_{-2}),
            \stepcounter{eqlistI}
            \tag{\theeqlistI}
            \label{mikh1}
            \\
            \stepcounter{eqlistI}
            \tag{\theeqlistI}
            \label{eq1}
            \dv{u_0}{t} &= (u_1-u_0)(u_0-u_{-1})\left(\frac {u_2}{u_1}-\frac{u_{-2}}{u_{-1}}\right),
            \\
            \dv{u_0}{t}&=u_0(u_2u_1-u_{-1}u_{-2}),
            \stepcounter{eqlistII}
            \tag{\theeqlistII}
            \label{INB1}
            \\
            \dv{u_0}{t}&=(u_1-u_0+a)(u_0-u_{-1}+a)(u_2-u_1+u_{-1}-u_{-2}+2a)+b,
            \stepcounter{eqlistII}
            \tag{\theeqlistII}
            \label{INB3}
            \\
            \dv{u_0}{t}&=u_0(u_1u_0-a)(u_0u_{-1}-a)(u_2u_1-u_{-1}u_{-2}),
            \stepcounter{eqlistII}
            \tag{\theeqlistII}
            \label{MX2}
            \\
            \dv{u_0}{t}&=(u_1+u_0)(u_0+u_{-1})(u_2+u_1-u_{-1}-u_{-2}).
            \stepcounter{eqlistII}
            \tag{\theeqlistII}
            \label{INB2}
        \end{align}
        Equation \eqref{INB} is the well-known INB equation 
        \cite{Bogoyavlensky1988,Itoh1975,Narita1982}.
        Equations \eqref{mod_INB} with $a=0$ and \eqref{mikh1} with $a=0$ 
        are  simple modifications of the INB and were presented  in \cite{MikhailovXenitidis2013} 
        and \cite{Bogoyavlensky1991}, respectively.
        Equation \eqref{mikh1} with $a=1$ has been found in 
        \cite{AdlerPostnikov2014,MikhailovXenitidis2013}.
        Equation \eqref{INB1} is a well-known modification of  INB equation \eqref{INB}, 
        found by Bogoyalavlesky himself \cite{Bogoyavlensky1988}. 
        Finally, equation \eqref{MX2} with $a=0$ was considered in \cite{AdlerPostnikov2008}. 
        All the equations in this list can be reduced to the INB equation
        using autonomous non-invertible non-point transformations.
        Moreover, equations \eqref{mod_INB},\eqref{eq1} and \eqref{INB1} are related through
        non-invertible transformations to the equation:
        \begin{align}
            \dv{u_{0}}{t} &= \left( u_{2}-u_{0} \right)\left( u_{1}-u_{-1} \right)
                \left( u_{0}-u_{-2} \right).
            \label{eq311}
        \end{align}
        We will also study  below equation \eqref{eq311}, see Remark \ref{rem_R} in Section \ref{sec:method}.
    \item Other equations:
        \begin{align}
            \dv{u_0}{t} &= u_0^2(u_2u_1-u_{-1}u_{-2})-u_0(u_1-u_{-1}),
            \stepcounter{eqlistI}
            \tag{\theeqlistI}
            \label{seva}
            \\
            \dv{u_0}{t} &= (u_0+1)
            \left[\frac{u_2u_0(u_{1}+1)^2}{u_1}-\frac{u_{-2}u_0(u_{-1}+1)^2}{u_{-1}}
            +(1+2u_0)(u_1-u_{-1})\right],
            \stepcounter{eqlistI}
            \tag{\theeqlistI}
            \label{rat}
            \\
            \dv{u_0}{t} &= (u_0^2+1)\left(u_2\sqrt{u_1^2+1}-u_{-2}\sqrt{u_{-1}^2+1}\right),
            \stepcounter{eqlistI}
            \tag{\theeqlistI}
            \label{sroot}
            \\
            \dv{u_0}{t}&=u_1u_0^3u_{-1}(u_2u_1-u_{-1}u_{-2})-u_0^2(u_1-u_{-1}).
            \setcounter{eqlistII}{14}
            \tag{\theeqlistII}
            \label{SK2}
        \end{align}
        Equation \eqref{seva} has been found in \cite{TsujimotoHirota1996} 
        and it is called the discrete Sawada-Kotera equation 
        \cite{Adler2011,TsujimotoHirota1996}. 
        Equation \eqref{SK2} is a simple modification of the discrete
        Sawada-Kotera equation \eqref{seva}. 
                Equation \eqref{rat} has been found in \cite{Adler2016b} and 
        is related to \eqref{seva}.
        On the other hand equation \eqref{sroot} has been found as 
        a result of the classification in \cite{GarifullinYamilovLevi2016} 
        and seems to be a new equation. 
        It was shown in \cite{GarifullinYamilov2017} that equation 
        \eqref{sroot} is a discrete analogue of the Kaup-Kupershmidt
        equation.
        Then we will refer to equation \eqref{sroot} as the discrete
        Kaup-Kupershmidt equation.
        No transformation into known equations 
        of equation \eqref{sroot} is known.
\end{description}

As a result we will obtain several examples of autonomous nontrivial 
quad-equations.
By nontrivial equations we mean non-degenerate, irreducible
and nonlinear equations. Moreover we consider as trivial also the
equations
\begin{subequations}
    \begin{align}
        u_{n+1,m+1}u_{n,m+1}=\kappa_1 u_{n+1,m}u_{n,m},
        \label{ple1}
        \\ 
        u_{n+1,m+1}u_{n,m}=\kappa_2u_{n+1,m}u_{n,m+1},
        \label{ple2}
    \end{align}
    \label{ple}
\end{subequations}
which are equavalent to a linear one up to the transcendental transformation 
$u_{n,m}=\exp v_{n,m}$.
The resulting equations belong to two different types of equations:
they can be either \emph{Liouville type} LT or \emph{sine-Gordon type}
sGT.
Liouville type equations are quad-equation which are \emph{Darboux integrable},
i.e. they possess 
\emph{first integrals}, one containing only shifts in the first
direction and the other containing only shifts in the second direction.
This means that there exist two functions:
\begin{subequations}
    \begin{align}
        W_1=W_{1,n,m}(u_{n+l_1,m},u_{n+l_1+1,m},\ldots,u_{n+k_1,m}),
        \label{eq:darbfint1}
        \\
        W_2=W_{2,n,m}(u_{n,m+l_2},u_{n,m+l_2+1},\ldots,u_{n,m+k_2}),
        \label{eq:darbfint2}
    \end{align}
    \label{eq:darbfint}
\end{subequations}
where $l_1<k_1$ and $l_2<k_2$ are integers, such that the relations
\begin{subequations}
    \begin{align}
        (T_n-\Id)W_2=0, 
        \label{eq:darb1}
        \\
        (T_m-\Id)W_1=0
        \label{eq:darb2}
    \end{align}
    \label{eq:darbdef}
\end{subequations}
hold true identically on the solutions of 
\eqref{eq:quadgen}.
By $T_n,T_m$ we denote the shift operators 
in the first and second 
directions, i.e. $T_n h_{n,m}=h_{n+1,m}$, $T_m h_{n,m}=h_{n,m+1}$,
and by $\Id$ we denote the identity 
operator $\Id h_{n,m}=h_{n,m}$.
The number $k_{i}-l_{i}$, where
$i=1,2$, is called the \emph{order} of
the first integral $W_{i}$.
On the other hand a quad-equation is said to be of sine-Gordon type
if it is not Darboux integrable and possess two generalized symmetries
of the form \eqref{eq:symmquad}.
Such equations are usually integrable by the inverse scatteting 
method.

We underline that, by construction, all the equations we are going
to produce will possess a generalized symmetry in the $n$ direction \eqref{eq:symm1}
which is not sufficient for integrability.
To prove that the equation, we are going to produce, are LT or sGT
we will either identify them with known equations, or present
their first integrals in the sense of \eqref{eq:darbfint} or a 
generalized symmetry in the $m$ direction \eqref{eq:symm2}.

The plan of the paper is then the following:
In Section \ref{sec:method} we discuss the theoretical background
which allows us to make the relevant computations.
In Section \ref{sec:results} we enumerate all the quad-equations
of the form \eqref{eq:quadgen} which corresponds to the equations
of the Class I and II and describe their properties.
Finally in Section \ref{sec:summary} we give a summary of the work
and some outlook for future works in the field.

\section {The method}
\label{sec:method}

The most general autonomous multi-affine quad-equation has the following
form:
\begin{equation}
    \begin{aligned}
        F\equiv u_{{n+1,m+1}}u_{{n,m+1}} 
        &\left( a_{{1}}u_{{n,m}}u_{{n+1,m}}+a_{{2}}u_{{n,m}}+a_{{3}}u_{{n+1,m}}+a_{{4}} \right) 
        \\
        +u_{{n+1,m+1}} 
        &\left( b_{{1}}u_{{n,m}}u_{{n+1,m}}+b_{{2}}u_{{n,m}}+b_{{3}}u_{{n+1,m}}+b_{{4}} \right) 
        \\
        +u_{{n,m+1}} &\left( c_{{1}}u_{{n,m}}u_{{n+1,m}}+c_{{2}}u_{{n,m}}+c_{{3}}u_{{n+1,m}}+c_{{4}} \right) 
        \\
        +&d_{{1}}u_{{n,m}}u_{{n+1,m}}+d_{{2}}u_{{n,m}}+d_{{3}}u_{{n+1,m}}+d_{{4}}=0.
    \end{aligned}
    \label{F}
\end{equation}
From \eqref{eq:symm1} and \eqref{eq:4thgen} we have that
the most general symmetry in the $n$ direction belonging to
Class I and II can be written as:
\begin{equation}
    \begin{aligned}
        \dot u_{n,m} &= a(u_{n+1,m}, u_{n,m}, u_{n-1,m}) u_{n+2,m}+b(u_{n+1,m}, u_{n,m}, u_{n-1,m}) u_{n-2,m}
        \\
        &+ c(u_{n+1,m}, u_{n,m}, u_{n-1,m}).
    \end{aligned}
    \label{e1}
\end{equation}

We have then the following general result, analogous to Theorem 2 
in \cite{GarifullinYamilov2015}:

\begin{theorem}
    If the autonomous quad-equation \eqref{F} admits a generalized
    symmetry of the form \eqref{e1} then it has the following form:
    \begin{equation}
        \begin{aligned}
            \hat{F}&\equiv \alpha u_{{n+1,m+1}}u_{{n,m+1}}
            +u_{{n+1,m+1}} \left( \beta_{{1}}u_{{n,m}}+\beta_{{2}} \right) 
            +u_{{n,m+1}} \left(\gamma_{{1}}u_{{n+1,m}}+\gamma_{{2}} \right) 
            \\
            &+\delta_{{1}}u_{{n,m}}u_{{n+1,m}}+\delta_{{2}}u_{{n,m}}
            +\delta_{{3}}u_{{n+1,m}}+\delta_{{4}}=0.
        \end{aligned}
        \label{Fh}
    \end{equation}
    \label{thm:struct}
\end{theorem}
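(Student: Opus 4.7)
My plan is to impose the symmetry compatibility on \eqref{F} and extract algebraic constraints on its coefficients. Writing $x = u_{n,m}$, $y = u_{n+1,m}$, $z = u_{n,m+1}$, $w = u_{n+1,m+1}$ and denoting the corresponding partial derivatives of $F$ by $F_x, F_y, F_z, F_w$, the compatibility is
$$F_x\,\dot u_{n,m} + F_y\,\dot u_{n+1,m} + F_z\,\dot u_{n,m+1} + F_w\,\dot u_{n+1,m+1} = 0,$$
and must hold identically on the solution variety of the quad-equation together with its lattice shifts. First I would use $F=0$ at the central square to eliminate $w$, and the shifted quad-equations at $(n\pm 1,m)$ and $(n\pm 2,m)$ to eliminate the row-$(m+1)$ variables that appear inside $\dot u_{n,m+1}$ and $\dot u_{n+1,m+1}$. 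After clearing denominators the identity becomes a polynomial identity in the remaining independent variables $u_{n-2,m},\dots,u_{n+3,m}$ and $u_{n,m+1}$.

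The main step is to extract the coefficients of the extremal variables on row $m$. The variable $u_{n+3,m}$ enters only through the factor $A(u_{n+2,m},u_{n+1,m},u_{n,m})\,u_{n+3,m}$ inside $\dot u_{n+1,m}$, so its coefficient in the polynomial identity is essentially $F_y\cdot A$; treated as a polynomial in the still-free variable $u_{n+2,m}$, this coefficient must vanish identically, which forces specific relations on the $y$-derivative of $F$. A symmetric argument applied to $u_{n-2,m}$ using $B$, followed by the analysis of the next-to-extremal variables $u_{n+2,m}$ and $u_{n-1,m}$ (which couple several $\dot u$ terms and pick up contributions from the rational substitutions used for $u_{n\pm 1,m+1}$ and $u_{n\pm 2,m+1}$), yields a full linear system on the coefficients $a_i,b_i,c_i,d_i$ of \eqref{F}.

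The principal technical obstacle is the bookkeeping: after the top-row eliminations, the polynomial identity is long, and one must carefully track which monomials in the extremal variables carry which coefficients of $F$, including the contributions inherited from the substitutions. Following the template of Theorem 2 in \cite{GarifullinYamilov2015} (which handles the analogous three-point symmetry case), the expected outcome is that the combined constraints collapse precisely to the seven vanishings
$$a_1=a_2=a_3=b_1=b_3=c_1=c_2=0,$$
so that after renaming the nine surviving coefficients $a_4,b_2,b_4,c_3,c_4,d_1,d_2,d_3,d_4$ as $\alpha,\beta_1,\beta_2,\gamma_1,\gamma_2,\delta_1,\delta_2,\delta_3,\delta_4$, the general form \eqref{F} reduces to the stated \eqref{Fh}.
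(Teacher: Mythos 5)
Your overall strategy --- writing the determining equation, eliminating the row-$(m+1)$ variables through the quad-equation and its shifts, and reading off the coefficients of the extremal variables --- is a legitimate ``from scratch'' route to the result. The paper instead quotes the Levi--Yamilov integrability conditions \eqref{eq:3334}, differentiates them with respect to $u_{n\pm2,m}$, and uses the fact that for a symmetry of the form \eqref{e1} the right-hand sides do not depend on $u_{n\pm2,m}$; this yields directly the two conditions $\partial^{2}F/\partial u_{n+1,m+1}\partial u_{n+1,m}=\partial^{2}F/\partial u_{n,m+1}\partial u_{n,m}=0$ and hence \eqref{rest}.

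However, your key step is misstated in a way that breaks the argument. The variable $u_{n+3,m}$ does \emph{not} enter only through $A(u_{n+2,m},u_{n+1,m},u_{n,m})\,u_{n+3,m}$ in $\dot u_{n+1,m}$: it also enters $\dot u_{n+1,m+1}$ through the term $A(u_{n+2,m+1},u_{n+1,m+1},u_{n,m+1})\,u_{n+3,m+1}$ once you substitute $u_{n+3,m+1}=f(u_{n+3,m},u_{n+2,m},u_{n+2,m+1})$, which by multi-affinity is a linear-fractional (M\"obius) function of $u_{n+3,m}$. If the coefficient of $u_{n+3,m}$ really were $F_y\cdot A$ and had to vanish identically, you would conclude $F_y\equiv 0$ (since $A\not\equiv 0$ for a genuine five-point symmetry), i.e.\ the quad-equation would be degenerate --- an absurdity. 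The correct mechanism is the interplay of the two occurrences: after clearing the single denominator affine in $u_{n+3,m}$, the top coefficient (of $u_{n+3,m}^{2}$) is proportional to $F_y\cdot A$ times the obstruction to $u_{n+3,m+1}$ being \emph{affine} in $u_{n+3,m}$, namely a shift of $\partial^{2}F/\partial u_{n+1,m+1}\partial u_{n+1,m}$; non-degeneracy and $A\not\equiv0$ then force that mixed second derivative to vanish, giving $a_1=a_3=b_1=b_3=0$. The symmetric analysis at $u_{n-2,m}$, where $u_{n-2,m+1}$ is M\"obius in $u_{n-2,m}$ with the obstruction controlled by $\partial^{2}F/\partial u_{n,m+1}\partial u_{n,m}$, gives $a_1=a_2=c_1=c_2=0$. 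These two conditions alone already produce all seven vanishings of \eqref{rest}; your proposed analysis of the next-to-extremal variables $u_{n+2,m}$ and $u_{n-1,m}$ is not needed for this theorem (it belongs to the later stage where specific $A$, $B$, $C$ are imposed). As written, the proposal asserts the seven vanishings as the ``expected outcome'' by analogy rather than deriving them, and the one derivation it does sketch would prove the wrong statement.
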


\begin{proof}
    By multi-linearity we can always solve \eqref{F} with respect to
    $u_{n+1,m+1}$ and write:
    \begin{equation}
        u_{n+1,m+1}=f(u_{n+1,m},u_{n,m},u_{n,m+1}).
    \end{equation}
    From Theorem 2 in \cite{LeviYamilov2009} we have that a quad-equation
    admits a generalized symmetry of the form \eqref{eq:symm1} with
    $k_{1}=-k'_{1}=2$ if the following conditions are satisfied:
    \begin{subequations}  
        \begin{align}
            (T_n-T_n^{-1})\log \dfrac{f}{u_{n+1,m}}&=(1-T_{m})\log \pdv{\varphi}{u_{n+2,m}},
            %a(u_{n+1,m}, u_{n,m}, u_{n-1,m}),
            \label{(33)}
            \\
            (T_n^{-2}-1)\log (\dfrac{f}{u_{n,m}}\middle/\dfrac{f}{u_{n,m+1}})&=(1-T_{m})\log \pdv{\varphi}{u_{n-2,m}}.
            %b(u_{n+1,m}, u_{n,m}, u_{n-1,m}).
            \label{(34)}
        \end{align}
        \label{eq:3334}
    \end{subequations}
    In \eqref{eq:3334} we suppressed the indices $n,m$ in $\varphi$ since we
    are dealing with autonomous differential-difference equations.
    Now from \eqref{e1} we have:
    \begin{equation}
        \pdv{\varphi}{u_{n+2,m}} = a(u_{n+1,m}, u_{n,m}, u_{n-1,m}),
        \quad
        \pdv{\varphi}{u_{n-2,m}}=b(u_{n+1,m}, u_{n,m}, u_{n-1,m}).
        \label{eq:e1diff}
    \end{equation}
    From equation \eqref{eq:e1diff} we have differentiating \eqref{(33)} 
    w.r.t. $u_{n+2,m}$:
    \begin{equation}
        0=\dfrac{}{u_{n+2,m}}T_n\log\dfrac{f}{u_{n+1,m}}=
        -2T_n\left(\frac{\partial^{2}F/\partial u_{n+1,m+1}\partial u_{n+1,m}}{\partial F/\partial{u_{n+1,m+1}}}\right),
        \label{eq:cond1}
    \end{equation}
    where we used the implicit function theorem and the fact that $F$
    is multi-affine \eqref{F}.
    In the same way differentiating \eqref{(34)} w.r.t $u_{n-2,m}$
    we get:
    \begin{equation}
        0=\dfrac{}{u_{n-2,m}}T^{-2}_n\log(\dfrac{f}{u_{n,m}}\middle/\dfrac{f}{u_{n,m+1}})=
        -2T_n^{-2}\left(\frac{\partial^{2}F/\partial u_{n,m+1}\partial u_{n,m}}{\partial F/\partial{u_{n,m+1}}}\right).
        \label{eq:cond2}
    \end{equation}
    Therefore we get the two conditions:
    \begin{equation}
        \dfrac{^2F}{u_{n+1,m+1}\partial u_{n+1,m}}=\dfrac{^2F}{u_{n,m+1}\partial u_{n,m}}=0. 
        \label{eq:cond3}
    \end{equation}
    Working out explicitly the conditions in \eqref{eq:cond3} and using
    \eqref{F} we obtain:
    \begin{equation}
        a_1=a_2=a_3=b_1=b_3=c_1=c_2=0.
        \label{rest}
    \end{equation}
    Relabeling the parameters as
    \begin{equation}
        a_{4}\to \alpha,
        \quad
        b_{2}\to \beta_{1},
        \quad
        b_{4}\to \beta_{2},
        \quad
        c_{3}\to \gamma_{1},
        \quad
        c_{4}\to \gamma_{2},
        \quad
        d_{i} \to \delta_{i},
        \label{eq:relabel}
    \end{equation}
    equation \eqref{Fh} follows.
\end{proof}

\begin{remark}
    We note that the condition \eqref{eq:cond3} is the same as formula
    (29) in \cite{GarifullinYamilov2015} even though the class
    of considered differential-difference equation is different.
    \label{rem:cond29}
\end{remark}

\begin{remark}\label{rem_R}
   It can be proved in a similar way that Theorem \ref{thm:struct} is valid for generalized
    symmetries of the form \eqref{eq:symm1} satisfying the conditions $k_1'=k_1$ and 
    \begin{equation}
        \dfrac{^2\phi_{n,m}}{u_{n+k_1}^2}=0,\quad \dfrac{^2\phi_{n,m}}{u_{n-k_1}^2}=0.
        \label{eq:gencondsymm}
    \end{equation}
\end{remark}

Theorem \ref{thm:struct} tells us that the most general form of
a quad-equation admitting a five-point generalized symmetry in the $n$
direction of the form \eqref{e1} is given by \eqref{Fh}.
At this point we can pick up any of the members of Class I and II
and follow the scheme of \cite[App. B]{GarifullinYamilovLevi2016},
i.e. we fix a specific form of $a$, $b$ and $c$ in \eqref{e1}.
We beging by imposing the exponential integrability conditions 
\eqref{(33)} and \eqref{(34)} and finally we impose the symmetry condition:
\begin{equation}
    \sum_{i,j\in\left\{ 0,1 \right\}} 
    \varphi\left( u_{n+2+i,m+j},\ldots,u_{n-2+i,m+j} \right)
    \pdv{\hat{F}}{u_{n+i,m+j}}=0
    \label{eq:symmcond}
\end{equation}
which must be satisfied on all the solutions of \eqref{Fh}.
Then we express in \eqref{eq:symmcond} the functions
$u_{n+2,m+1}$, $u_{n+1,m+1}$ and $u_{n-1,m+1}$ in terms of the
independent variables
\begin{equation}
    u_{n+2,m}, \, u_{n+1,m}, \, u_{n,m}, \, u_{n-1,m}, \, u_{n,m+1}.
    \label{eq:indvar}
\end{equation}
Taking the numerator of \eqref{eq:symmcond} we obtain a polynomial
in the independent variable \eqref{eq:indvar}. This polynomial must
be identically zero, so we can equate to zero all its coefficients.

This lead us to a system of algebraic equation in the coefficients
of the multi-affine function \eqref{Fh}.
Doing so we reduce the problem of finding a quad-equation admitting
a given five-point symmetry of the form \eqref{e1} to the problem of 
solving system of algebraic equations.
Such system can be solved using a Computer Algebra System like Maple,
Mathematica or Reduce.
Amongst the possible solutions of the system we choose the non-degenerate
ones.
The non-degeneracy condition is the following one:
\begin{equation}
    \pdv{F}{u_{n+1,m+1}}\pdv{f}{u_{n+1,m}}\pdv{f}{u_{n,m+1}}\pdv{f}{u_{n,m}}\neq0.
    \label{eq:nondeg}
\end{equation}
Note that this non-degeneracy condition includes the requirement
that a quad-equation must be given by an irreducible multi-affine
polynomial, see the introduction.

\begin{remark}
    Several equations in Class I and II, e.g. \eqref{Vol_mod1}
    or \eqref{Volz}, depend on some parameters.
    Depending on the value of the parameters there can be, in principle,
    different quad-equations admitting the given differential-difference
    equation as five-point generalized symmetries.
    When possible, in order to avoid ambiguities and simplify the problem, 
    we use some simple autonomous transformations to fix the values of some parameters.
    The remaining free parameters are then treated as unknown coefficients
    in the system of algebraic equations.
    We will describe these subcases when needed in the next section.
    \label{rem:params}
\end{remark}

The number of the resulting equations is then reduced using some point
transformations.
These point transformations are essentially of two different kind:
non-autonomous transformations of the dependent variable $u_{n,m}$:
\begin{gather}
    u_{n,m}\rightarrow (-1)^m u_{n,m},
    \tag{T.1}
    \label{trmsm}
    \\
    u_{n,m}\rightarrow \left(\frac{1+\imath\sqrt3}2\right)^m u_{n,m},
    \tag{T.2}
    \label{trmsm3}
    \\
    u_{n,m}\rightarrow \imath^m u_{n,m}
    \tag{T.3}
    \label{trim}
\end{gather}
and mirror reflections of the lattice:
\begin{align}
    m\rightarrow 1-m, 
    \quad
    &\text{i.e.}
    \quad 
    u_{n,m} \longleftrightarrow u_{n,m+1},\quad u_{n+1,m} \longleftrightarrow u_{n+1,m+1},
    \tag{T\textsuperscript{*}\kern -0.3pc.1}
    \label{trmm}
    \\
    n\rightarrow 1-n, 
    \quad
    &\text{i.e.}
    \quad 
    u_{n,m} \longleftrightarrow u_{n+1,m},\quad u_{n,m+1} \longleftrightarrow u_{n+1,m+1}.
    \label{trnm}
    \tag{T\textsuperscript{*}\kern -0.3pc.2}
\end{align}

Now, in the next section we describe the results of this search.

\section{Results}

\label{sec:results}

In this section we describe the results of the procedure outlined
in Section \ref{sec:method}.
Specifically, as described in \ref{rem:params} we will explicit
the particular cases in which the parametric equations can be divided.

\subsection{List 1}

\paragraph{Equation \eqref{Vol}:}
To equation \eqref{Vol} correspond only degenerate quad-equations
in the sense of condition \eqref{eq:nondeg}.

\paragraph{Equation \eqref{Vol0}:}
To equation \eqref{Vol0} correspond two nontrivial quad-equations:
\begin{equation}
    u_{n+1,m+1}(a u_{n,m+1}+u_{n,m})-u_{n+1,m}(u_{n,m+1}+au_{n,m})=0,
    \quad a^{2}=-1.
    \label{l4(7_)}
\end{equation}
Up to transformation \eqref{trmsm} we can reduce these two equations
to only one, say the one with $a=-\imath$.
Equation (\ref{l4(7_)}, $a=-\imath$) is a special case of equation  (7) with $a_2=-\imath$ of 
List 4 in \cite{GarifullinYamilov2012}.
This means that this is a LT equation, its first integrals
being:
\begin{equation}
    W_{1} = u_{n+1,m}u_{n-1,m},
    \quad
    W_{2} = \imath^{n+1}\frac{u_{n,m+1}-u_{n,m}}{u_{n,m+1}+u_{n,m}}. 
    \label{eq:fintI2}
\end{equation}
See \cite{GarifullinYamilov2012} for more details.

\paragraph{Equation \eqref{Vol1}:}
To equation \eqref{Vol1} correspond only degenerate quad-equations
in the sense of condition \eqref{eq:nondeg}.

\paragraph{Equation \eqref{Vol_mod}:}
To equation \eqref{Vol_mod} correspond only degenerate quad-equations
in the sense of condition \eqref{eq:nondeg} or the linear
discrete wave equation:
\begin{equation}
    u_{n+1, m+1}+u_{n, m+1}-u_{n+1, m}-u_{n, m}=0.
    \label{eq:dwave}
\end{equation}

\paragraph{Equation \eqref{Vol_mod1}:}
Equation \eqref{Vol_mod1} depends on the parameters $a$ and $b$.
If $a\neq0$ it is possible to scale it to one through a scaling
transformation.
This amounts to consider the cases $a=1$ and $a=0$.

To equation (\ref{Vol_mod1}, $a=1$) correspond only degenerate quad-equations
in the sense of condition \eqref{eq:nondeg} or the linear
discrete wave equation \eqref{eq:dwave}.
The same holds true for equation (\ref{Vol_mod1}, $a=0$).
So to all the instances of equation \eqref{Vol_mod1} correspond only trivial
or linear equations.

\paragraph{Equation \eqref{Vol2}:}
To equation \eqref{Vol2} correspond two quad-equation.
One is the trivial exponential wave equation \eqref{ple1}, while
the other one is the LT equation:
\begin{equation}
    u_{n+1,m+1}u_{n,m+1}+u_{n+1,m}u_{n,m}+1=0.
    \label{l4(9)}
\end{equation}
Equation \eqref{l4(9)} is equation (9) with $c_4=1$ of List 4 in 
\cite{GarifullinYamilov2012}.
Its first integrals are \cite{GarifullinYamilov2012}:
\begin{equation}
    W_{1} = \left( -1 \right)^{m}\left( 2u_{n+1,m}u_{n,m}+1 \right),
    \quad
    W_{2} = \left(\frac{u_{n,m+1}}{u_{n,m-1}}\right)^{(-1)^{n}}.
    \label{eq:fintI6}
\end{equation}

\paragraph{Equation \eqref{eq1ii}:}
To equation \eqref{eq1ii} correspond only degenerate quad-equations
in the sense of condition \eqref{eq:nondeg} and the trivial
linearizable equation \eqref{ple1}.

\paragraph{Equation \eqref{eq2ii}:}
Equation \eqref{eq2ii} is very rich, since it
give raise to many different equations. We have two trivial
linearizable equations of the form \eqref{ple1}, but also
twelve nontrivial equations.
We have four equations of the form:
\begin{equation}
    \left(k_1 u_{n,m}+k_3 u_{n,m+1}\right)u_{n+1,m} +\left(k_2 u_{n,m} +k_4 u_{n,m+1}\right) u_{n+1,m+1}
    =0,
    \label{eq:II2a}
\end{equation}
with
\begin{subequations}
    \begin{gather}
        k_1 = 1,\, k_2 = -1,\, k_3 = 1,\, k_4 = 1,
        \label{eq:IIa1}
        \\
        k_1 = 1,\, k_2 = \imath,\, k_3 = -\imath,\, k_4 = -1,
        \label{eq:IIa2}
        \\
        k_1 = -1,\, k_2 = \imath,\, k_3 = -\imath,\, k_4 = 1,
        \label{eq:IIa3}
        \\
        k_1 = -1,\, k_2 = -1,\, k_3 = 1,\, k_4 = -1.
        \label{eq:IIa4}
    \end{gather}
    \label{eq:IIapars}
\end{subequations}
Then four equations of the form:
\begin{equation}
    u_{n,m} u_{n+1,m}+k_1 u_{n,m+1} u_{n+1,m}+k_2 u_{n,m+1} u_{n+1,m+1}
    =0,
    \label{eq:II2b}
\end{equation}
with
\begin{subequations}
    \begin{gather}
        k_1 = -1+\imath,\, k_2 = -\imath,
        \label{eq:IIb1}
        \\
        k_1 = 1-\imath,\, k_2 = -\imath,
        \label{eq:IIb2}
        \\
        k_1 = -1-\imath,\, k_2 = \imath,
        \label{eq:IIb3}
        \\
        k_1 = 1+\imath,\, k_2 = \imath.
        \label{eq:IIb4}
    \end{gather}
    \label{eq:IIbpars}
\end{subequations}
Finally we have four equations of the form:
\begin{equation}
    u_{n,m} u_{n+1,m}+k_1 u_{n,m} u_{n+1,m+1}+k_2 u_{n,m+1} u_{n+1,m+1}
    =0,
    \label{eq:II2c}
\end{equation}
with:
\begin{subequations}
    \begin{gather}
        k_1 = -1+\imath,\, k_2 = -\imath,
        \label{eq:IIc1}
        \\
        k_1 = 1-\imath,\, k_2 = -\imath,
        \label{eq:IIc2}
        \\
        k_1 = -1-\imath,\, k_2 = \imath,
        \label{eq:IIc3}
        \\
        k_1 = 1+\imath,\, k_2 = \imath.
        \label{eq:IIc4}
    \end{gather}
    \label{eq:IIcpars}
\end{subequations}

Up to the transformations (\ref{trmsm},\ref{trim}) and (\ref{trmm}) we 
can reduce equations (\ref{eq:II2a},\ref{eq:II2b},\ref{eq:II2c}) to only two equations,
say \eqref{eq:II2a} with the parameters \eqref{eq:IIa4} and
\eqref{eq:II2b} with the parameters \eqref{eq:IIb4}:
\begin{subequations}
    \begin{gather}
        (u_{n,m}-u_{n,m+1})u_{n+1,m}+(u_{n,m+1}+u_{n,m})u_{n+1,m+1}=0,
        \label{l4(7)}
        \\
        \left[u_{n,m}+(1+\imath)u_{n,m+1}\right]u_{n+1,m}+\imath u_{n+1,m+1}u_{n,m+1}=0.
        \label{nDar}
    \end{gather}
    \label{II2def}
\end{subequations}
For a complete description of the transformations leading to \eqref{II2def}
see Figure \ref{fig:II2transf}.

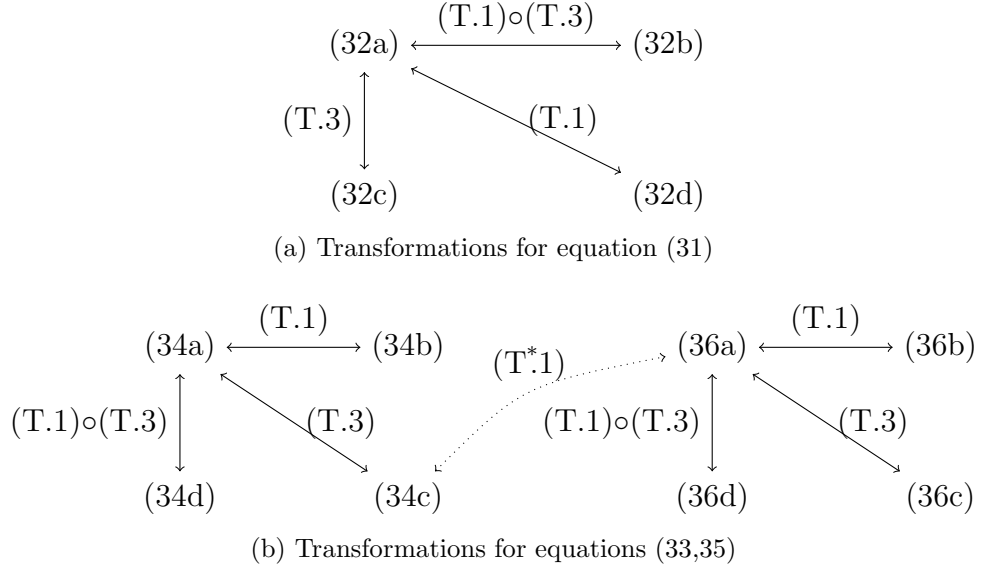
\begin{figure}[hbtp]
    \centering
    \subfloat[][Transformations for equation \eqref{eq:II2a}]{
    \begin{tikzpicture}
        \path (0,2) node(e1) {\eqref{eq:IIa1}}
            (4,2) node(e2) {\eqref{eq:IIa2}}
            (0,0) node(e3) {\eqref{eq:IIa3}}
            (4,0) node(e4) {\eqref{eq:IIa4}};
            \draw[<->] (e1) -- node[above] {\eqref{trmsm}$\circ$\eqref{trim}} (e2);
            \draw[<->] (e1) -- node[right] {\eqref{trmsm}} (e4);
            \draw[<->] (e1) -- node[left] {\eqref{trim}} (e3);
    \end{tikzpicture}
    }
    \\
    \subfloat[][Transformations for equations (\ref{eq:II2b},\ref{eq:II2c})]{
    \begin{tikzpicture}
        \path (0,2) node(e1) {\eqref{eq:IIb1}}
            (3,2) node(e2) {\eqref{eq:IIb2}}
            (3,0) node(e3) {\eqref{eq:IIb3}}
            (0,0) node(e4) {\eqref{eq:IIb4}}
            (7,2) node(e5) {\eqref{eq:IIc1}}
            (10,2) node(e6) {\eqref{eq:IIc2}}
            (10,0) node(e7) {\eqref{eq:IIc3}}
            (7,0) node(e8) {\eqref{eq:IIc4}};
            \draw[<->] (e1) -- node[above] {\eqref{trmsm}} (e2);
            \draw[<->] (e1) -- node[left] {\eqref{trmsm}$\circ$\eqref{trim}} (e4);
            \draw[<->] (e1) -- node[right] {\eqref{trim}} (e3);
            \draw[<->,dotted] (e3) ..  controls (4.5,1.5) .. node[dotted,above] {\eqref{trmm}} (e5);
            \draw[<->] (e5) -- node[above] {\eqref{trmsm}} (e6);
            \draw[<->] (e5) -- node[left] {\eqref{trmsm}$\circ$\eqref{trim}} (e8);
            \draw[<->] (e5) -- node[right] {\eqref{trim}} (e7);
    \end{tikzpicture}
    }
    \caption{The relationship between the equations (\ref{eq:II2a},\ref{eq:II2b},\ref{eq:II2c}).
    By \eqref{trmsm}$\circ$\eqref{trim} we mean the composition of the
    two transformations.}
    \label{fig:II2transf} 
\end{figure}

Equation \eqref{l4(7)} is a LT equation.
It can identified with equation (7) with $a_2=1$ of List 4 of
\cite{GarifullinYamilov2012}.
In particular its first integrals are given by:
\begin{equation}
    W_{1} = \left( -1 \right)^{m}u_{n+1,m}u_{n-1,m},
    \quad
    W_{2} = \imath^{n} \frac{u_{n,m}+\imath u_{n,m+1}}{\imath u_{n,m}+u_{n,m+1}}.
    \label{eq:fintl4(7)}
\end{equation}

Equation \eqref{nDar} is a LT equation too.
Its first integrals are given by:
\begin{equation}
    W_1=u_{n,m}u_{n+1,m}u_{n+2,m}u_{n+3,m},
    \quad 
    W_2=(-\imath)^n\frac{u_{n,m}+\imath u_{n,m+1}}{u_{n,m}+u_{n,m+1}}.
    \label{eq:fintnewdarb}
\end{equation}
Note that the $W_{2}$ first integral of \eqref{nDar} is very
similar to the $W_{2}$ first integral of \eqref{l4(7)}.
Moreover, the $W_{1}$  first integral of \eqref{nDar} is a generalization
of the $W_{1}$ first integral of \eqref{l4(7)} with more points.
It is possible to show that equation \eqref{nDar} do not belong to 
known families of Darboux integrable equations 
\cite{GarifullinYamilov2012,AdlerStartsev1999,GSY_DarbouxI,Startsev2014,Startsev2010},
because of the specific form of $W_1$.
Therefore we believe that equation \eqref{nDar} is a new Darboux
integrable equation.

\subsection{List 2}

\paragraph{Equation \eqref{Bur2}:}

Equation \eqref{Bur2} depends on four parameters $a$, $b$, $c$ and $d$
linked among themselves by the condition $\left( a+1 \right)d=bc$.
Therefore we have that using a linear transformation $u_{n,m}\to\alpha u_{n,m}+\beta$
we need to consider only three different cases: 1) $ a\neq0,a\neq -1,b=0,d=0$, 
2) $a=-1,b=1,c=0$,  3) $a=-1,b=0$ (recall that $a\neq0$ in all cases).

\begin{description}
    \item[Case $ a\neq0,a\neq -1,b=0,d=0$:]
        For these values of the parameters we obtain either
        linear equations or exponential wave equations \eqref{ple2}
        if also $a\neq1$ and $c\neq0$.
        For $a=1$ and $c=0$ we obtain the equation:
        \begin{equation}
            \begin{aligned}
                &\phantom{+}d_3^{4}u_{{n,m}}u_{{n+1,m}}+d_3^{3}d_4  \left( u_{{n+1,m}}+u_{{n,m}} \right) 
            -{  c_3} d_{3}d_4^{2} \left( u_{{n,m+1}}+u_{{n+1,m+1}}\right)
            \\
            &+d_3^{2}d_4 \left[ d_4+c_3\left( u_{{n,m}}u_{{n+1,m+1}}+u_{{n,m+1}}u_{{n+1,m}} \right)\right] 
            +c_3^{2} d_4^{2}u_{{n,m+1}}u_{{n+1,m+1}}=0.
            \end{aligned}
            \label{eq:I7a1b0c0d0}
        \end{equation}
        In equation \eqref{eq:I7a1b0c0d0} we have the condition
        $c_3d_3d_4 \neq 0$.
        Using a linear transformation of the form 
        $u_{n,m} = A \hat{u}_{n,m}+B$ we can reduce equation
        \eqref{eq:I7a1b0c0d0} to two different cases.
        \begin{description}
            \item[Case 1:] If $d_3^2+c_3d_4=0$ and if we chose
            $A=d_4/d_3=-4 B$, we have:
            \begin{equation}
                \begin{aligned}
                    \left( u_{n+1,m+1}-u_{n+1,m} \right)\left( u_{n,m}-u_{n,m+1} \right)
                    +u_{n+1,m+1}+u_{n+1,m}+u_{n,m+1}+u_{n,m}=0.
                \end{aligned}
                \label{eq:I7a1b0c0d0case1}
            \end{equation}
            This equation is equation (1) of List 3 of \cite{GarifullinYamilov2012},
            hence it is a LT equation with the following first integrals:
            \begin{subequations}
                \begin{align}
                    W_{1} & = \frac{2\left( u_{n+1,m}+u_{n,m} \right)+1}{%
                    \left( u_{n+2,m}-u_{n,m} \right)\left( u_{n+1,m}-u_{n-1,m} \right)},
                    \label{eq:I7a1b0c0d0case1int1}
                    \\
                    W_{2} & = \left( -1 \right) ^{n}
                    \frac { u_{{n,m-1}}+u_{{n,m+1}}-2 \left(u_{{n,m}}-1\right) }{u_{{n,m+1}}-u_{{n,m-1}}}.
                    \label{eq:I7a1b0c0d0case1int2}
                \end{align}
                \label{eq:I7a1b0c0d0case1int}
            \end{subequations}
            \item[Case 2:] If $d_3^2+c_3d_4\neq0$ we chose
                \begin{equation}
                    A=2\frac{d_4}{d_3}, 
                    \quad  
                    B=\frac{d_3d_4}{d_3^2+c_3d_4},
                    \label{eq:I7a1b0c0d0case2def}
                \end{equation}
                and equation \eqref{eq:I7a1b0c0d0} reduces to:
                \begin{equation}
                    \begin{aligned}
                        \left( \frac{c_3d_4}{d_3^{2}}u_{{n+1,m+1}}+u_{{n+1,m}} \right)  
                            \left(\frac{c_3d_4}{d_3^{2}}u_{{n,m+1}}+u_{{n,m}} \right) 
                            +u_{{n+1,m}}+u_{{n,m}}
                        +\frac{d_{3}^{2}}{d_3^2+c_3d_4}=0.
                    \end{aligned}
                    \label{eq:I7a1b0c0d0case2}
                \end{equation}
            This equation is equation (3) of List 3 of \cite{GarifullinYamilov2012}
            with $a_{3}=c_3d_4/d_3^{2}$, hence it is a LT equation with the following first integrals:
            \begin{subequations}
                \begin{align}
                    W_{1}&=
                    \left( -a_{3} \right)^{-m}
                    \frac{\left( a_{3}+1 \right)\left( u_{n+1,m}+u_{n,m} \right)+1}{%
                    \left( u_{n+2,m}-u_{n,m} \right)\left( u_{n+1,m}-u_{n-1,m} \right)},
                    \label{eq:I7a1b0c0d0case2int1}
                    \\
                W_{2}&=
                \left[ 
                    \frac{u_{n,m-1}+a_{3}u_{n,m}+1}{\sqrt{-a_{3}}\left( u_{n,m}+a_{3}u_{n,m+1} \right)}
                \right] ^{ \left( -1 \right) ^{n}}.
                    \label{eq:I7a1b0c0d0case2int2}
                \end{align}
                \label{eq:I7a1b0c0d0case2int}
            \end{subequations}
        \end{description}
   \item[Case $a=-1,b=1, c=0$:]
       For this value of the parameters only degenerate or linear equations arise.
       In particular we have discrete wave-like equation.
    \item[Case $a=-1,b=0$:]
        For this value of the parameters we have several linear wave-like
        equations, but also one nontrivial equation:
        \begin{equation} 
            \left(u_{n+1,m+1}+\frac{d_3}{c_3}\right)\left(u_{n,m}-\frac{b_4}{c_3}\right)
            -\left(u_{n,m+1}+\frac{d_3}{c_3}\right)\left(u_{n+1,m}-\frac{b_4}{c_3}\right)=0,
            \label{(4)} 
        \end{equation}
        where we have the following relationship between the parameters:
        \begin{equation}
            d=-c\frac{b_4}{c_3},\quad c(b_4+d_3)=0.
            \label{eq4params}
        \end{equation}
        The case $b_4=-d_3$ \eqref{(4)} is equivalent to \eqref{ple} up to a linear transformation. 
        The case $b_4\neq -d_3$, $c=d=0,$ and \eqref{(4)} is equavalent to (4) of List 4 of 
        \cite{GarifullinYamilov2012} up to a linear transformation and it is a LT equation.
\end{description}

\paragraph{Equation \eqref{Bur}:}
To equation \eqref{Bur} correspond only degenerate quad-equations
in the sense of condition \eqref{eq:nondeg} and the trivial
linearizable equation \eqref{ple2}.

\subsection{List 3}

\paragraph{Equation \eqref{Vol1s}:}
To equation \eqref{Vol1s} correspond only degenerate quad-equations
in the sense of condition \eqref{eq:nondeg} and the trivial
linearizable equation \eqref{ple1}.

\paragraph{Equation \eqref{mVol2}:}
Equation \eqref{mVol2} depends on the parameters $a$ and $c$.
If $a\neq0$ then it can always be rescaled to 1, hence
we can consider two independent cases: $a=1$ and $a=0$.

However, to equation \eqref{mVol2} with $a=0$ correspond only degenerate 
quad-equations in the sense of condition \eqref{eq:nondeg} and the trivial
linearizable equation \eqref{ple1}.

On the other hand, the case when $a=1$ admits nontrivial solutions.
Indeed,  to equation \eqref{mVol2} with $a=1$ correspond eight
nontrivial equations. 
Four of them are valid for every value of $c$:
\begin{equation}
     \left( u_{{n+1,m}}+\sigma_1 \right)  \left( u_{{n,m}}-\sigma_1 \right)=
     \left( u_{{n+1,m+1}}-\sigma_2 \right)  \left( u_{{n,m+1}}+\sigma_2 \right),
     \quad \sigma_{i}=\pm1.
    \label{eq:II4a1cm4}
\end{equation}
Using the transformation $u_{n,m}\to -u_{n,m}$ we can reduce the
(\ref{eq:II4a1cm4},$-\sigma_{1}=\sigma_{2}=1$) case to the
(\ref{eq:II4a1cm4},$\sigma_{1}=-\sigma_{2}=1$) case and the
(\ref{eq:II4a1cm4},$\sigma_{1}=\sigma_{2}=-1$) case to the
(\ref{eq:II4a1cm4},$\sigma_{1}=\sigma_{2}=1$) case respectively.
This means that we have only the two following independent equations
\begin{subequations}
    \begin{gather}
        (u_{n+1,m}+1)(u_{n,m}-1)=(u_{n+1,m+1}-1)(u_{n,m+1}+1),
        \label{t1}
        \\
        (u_{n+1,m}+1)(u_{n,m}-1)=(u_{n+1,m+1}+1)(u_{n,m+1}-1).
        \label{(L4_3)}
    \end{gather}
    \label{eq:II4a1cm4expl}
\end{subequations}

Other four equations are found if $c=-4$:
\begin{equation}
     \left( u_{{n+1,m}}+\sigma_1 \right)  \left( u_{{n,m}}-\sigma_1 \right)=
     - \left( u_{{n+1,m+1}}-\sigma_2 \right)  \left( u_{{n,m+1}}+\sigma_2 \right),
     \quad \sigma_{i}=\pm1.
    \label{eq:II4a1cgen}
\end{equation}
Using the transformation $u_{n,m}\to -u_{n,m}$ we can reduce
the (\ref{eq:II4a1cgen},$-\sigma_{1}=\sigma_{2}=1$) case to the
(\ref{eq:II4a1cgen},$\sigma_{1}=-\sigma_{2}=1$) case and the
(\ref{eq:II4a1cgen},$\sigma_{1}=\sigma_{2}=-1$) case to the
(\ref{eq:II4a1cgen},$\sigma_{1}=\sigma_{2}=1$) case respectively.
This means that we have only the two following independent equations
\begin{subequations}
    \begin{gather}
        (u_{n+1,m}+1)(u_{n,m}-1)=-(u_{n+1,m+1}+1)(u_{n,m+1}-1),
        \label{(L4_3m)}
        \\
        (u_{n+1,m}+1)(u_{n,m}-1)=-(u_{n+1,m+1}-1)(u_{n,m+1}+1).
        \label{t1m}
    \end{gather}
    \label{eq:II4a1cgenexpl}
\end{subequations}

Equation \eqref{t1} is a well-known sGT equation.
It corresponds to equation $T1^*$ in \cite{LeviYamilov2011}.
See references therein for a discussion of the appearance of this 
equation in literature. 
On the other hand equation \eqref{(L4_3)} is transformed into equation
(3) of List 4 in \cite{GarifullinYamilov2012} with the linear 
transformation $u_{n,m}=2\hat{u}_{n,m}+1$.
This means that this equation is a known LT equation with the following
first integrals:
\begin{equation}
    W_{1} = \left( u_{n,m}-1 \right)u_{n+1,m}+u_{n,m},
    \quad
    W_{2} = {\frac { \left( u_{{n,m}}-u_{{n,m-1}} \right)  \left( u_{{n,m+2}}-u_{{n,m+1}} \right) }{%
    \left( u_{{n,m+1}}-u_{{n,m-1}} \right)  \left( u_{{n,m+2}}-u_{{n,m}} \right) }}.
    \label{eq:L4_3_int}
\end{equation}
For further details see \cite{Startsev2010}.

Equation \eqref{(L4_3m)}  is LT new example of Darboux integrable 
equation and has first integrals:
\begin{subequations}
    \begin{align}
        W_1 &=(-1)^m(u_{n+1,m}+1)(u_{n,m}-1),
        \\
        W_2 &=\frac{(u_{n,m+3}-u_{n,m-1})(u_{n,m+1}-u_{n,m-3})}{(u_{n,m+3}-u_{n,m+1})(u_{n,m-1}-u_{n,m-3})}.
    \end{align}
    \label{eq:L4_3m_int}
\end{subequations}
The integral $W_{2}$ is a sixth order (seven-point) first integral.
It can be proved that there is no first integral of lower order in this direction.
Examples of equations with first integrals of such a high order,
except for a special series of Darboux integrable equations presented 
in \cite{GarifullinYamilov2012Ufa}, are, at the best of our knowledge,
new in the literature.

Now we are going to use a transformation theory presented in \cite{Yamilov1991,Startsev2010}. 
We can rewrite \eqref{(L4_3m)} in a special form and introduce a new function 
$v_{n,m}$ as follows:
\begin{equation}
    \frac{u_{n+1,m}+1}{u_{n+1,m+1}+1}=-\frac{u_{n,m+1}-1}{u_{n,m}-1}=v_{n+1,m}.
\end{equation}
So we have
\begin{subequations}
    \begin{gather}
        v_{n,m}=\frac{u_{n,m}+1}{u_{n,m+1}+1},\quad v_{n+1,m}=-\frac{u_{n,m+1}-1}{u_{n,m}-1},
        \\
        u_{n,m}=1+2\frac{v_{n,m}-1}{v_{n+1,m}v_{n,m}+1},\quad u_{n,m+1}=-1+2\frac{v_{n+1,m}+1}{v_{n+1,m}v_{n,m}+1}.
    \end{gather}
    \label{eq:npinv}
\end{subequations}
This is a non-point transformation which is invertible on the solutions of \eqref{(L4_3m)}.
Formula \eqref{eq:npinv} allows to rewrite \eqref{(L4_3m)} as
\begin{equation}
    (v_{n+1,m+1}+1)(v_{n,m}-1)=(v_{n+1,m}^{-1}+1)(v_{n,m+1}^{-1}-1),
    \label{eq_v}
\end{equation}
and its first integals as:
\begin{subequations}
    \begin{align}
        &W_1=(-1)^m\frac{(v_{n+1,m}+1)v_{n,m}(v_{n-1,m}-1)}{(v_{n+1,m}v_{n,m}+1)(v_{n,m}v_{n-1,m}+1)},
        \\ 
        &W_2=\frac{(v_{n,m+3}v_{n,m+2}v_{n,m+1}v_{n,m}-1)(v_{n,m+1}v_{n,m}v_{n,m-1}v_{n,m-2}-1)}{(v_{n,m+3}v_{n,m+2}-1)v_{n,m+1}v_{n,m}(v_{n,m-1}v_{n,m-2}-1)}.
    \end{align}
    \label{W_v}
\end{subequations}
Both first integrals \eqref{W_v} are of lowest possible order in their directions.
Example \eqref{eq_v} is still exeptional, even though
the order of $W_2$ has been lowered to five. 

Equation \eqref{t1m} is a sGT equation, as we can prove that it
possesses the following second order autonomous generalized symmetries 
in both directions:
\begin{subequations}
    \begin{align}
        \dv{u_{n,m}}{t_{1} }&
        \begin{aligned}[t]
            &=(u_{n,m}^2-1)
            \big[(u_{n+1,m}^2-1)(u_{n+2,m}+u_{n,m})
            \\
            &\quad-(u_{n-1,m}^2-1)(u_{n,m}+u_{n-2,m})
            -4(u_{n+1,m}-u_{n-1,m})\big],
        \end{aligned}
        \label{sym_mN2}
        \\
        \dv{u_{n,m}}{t_{2}} &=(u_{n,m}^2-1)(T_m-1)
        \left(\frac{u_{n,m+1}+u_{n,m}}{U_{n,m}}+\frac{u_{n,m-1}+u_{n,m-2}}{U_{n,m-1}}\right),
        \label{sym_nN2}
    \end{align}
\end{subequations}
where
\begin{equation}
    U_{n,m}=(u_{n,m+1}+u_{n,m})(u_{n,m}+u_{n,m-1})-2(u_{n,m}^2-1).
    \label{eq:Ut1mdef}
\end{equation}
It can be proved that equation \eqref{t1m} does not admit
any autonomous generalized symmetries of lower order.
Equation \eqref{t1m} seems to be a new interesting example and deserves 
a separate study, see \cite{GarifullinYamilov2018}.

    Here we limit ourselves to observe that equation \eqref{t1m} 
    is an integrable equation according to the algebraic
    entropy test \cite{BellonViallet1999,Viallet2006,Tremblay2001}.
    Indeed, computing the degree of the iterates of equation
    \eqref{t1m} by using the library \texttt{ae2d.py} \cite{GubHay,GubbiottiPhD2017},
    we obtain the following sequence:
    \begin{equation}
        1, 2, 4, 7, 11, 16, 22, 29, 37, 46, 56, 67, 79\dots.
        \label{eq:t1mseq}
    \end{equation}
    The sequence \eqref{eq:t1mseq} is fitted by the following
    generating function:
    \begin{equation}
        g(z)= \frac{z^2 - z + 1}{(1-z)^3}.
        \label{eq:t1mgf}
    \end{equation}
    Since the generating function \eqref{eq:t1mgf} has only one
    pole in $z_{0}=1$ and it lies on the unit circle, equation 
    \eqref{t1m} is integrable according to the algebraic entropy criterion.
    Moreover, due to the presence of $\left( z-1 \right)^{3}$ in the denominator 
    of the generating function \eqref{eq:t1mgf} we have that growth
    of equation \eqref{t1m} is \emph{quadratic}.
    According to the classification of discrete equations using
    algebraic entropy \cite{HietarintaViallet2007} we have that equation
    \eqref{t1m} is supposed to be genuinely integrable and not
    linearizable.

\paragraph{Equation \eqref{Volz}:}
To equation \eqref{Volz} with $a\neq0$ corresponds one quad-equation:
\begin{equation}
    \begin{aligned}
        &\left(u_{n,m}-u_{n,m+1}+\frac{b_4}{c_3}\right)\left(u_{n+1,m}-u_{n+1,m+1}+\frac{b_4}{c_3}\right)
        \\ 
        +&\frac{b_4+d_3}{c_3}\left(u_{n,m+1}-u_{n+1,m}-\frac{b_4}{c_3}-a\right)=0, \quad b_4+d_3\neq 0.
    \end{aligned}
    \label{(t4)}
\end{equation}
When $a=0$, to equation \eqref{Volz} corresponds two quad-equations. 
One of them is just \eqref{(t4)} with $a=0$, and the second one is:
\begin{equation}
    \begin{aligned}
        &\left(u_{n,m}+u_{n,m+1}-\frac{b_4}{c_3}\right)\left(u_{n+1,m}+u_{n+1,m+1}-\frac{b_4}{c_3}\right)
        \\
        +&\frac{b_4-d_3}{c_3}\left(u_{n,m+1}+u_{n+1,m}-\frac{b_4}{c_3}\right)=0, \quad b_4-d_3\neq 0,
    \end{aligned}
    \label{(t4+)}
\end{equation}
but in this case we have $b=c=0$.

For any $a$ equation \eqref{(t4)} is transformed into (T4) of \cite{LeviYamilov2011} 
by the following non-autonomous linear point transformation:
\begin{equation}
    u_{n,m}=\hat u_{n,m}\frac{b_4+d_3}{c_3}-n\left(\frac{b_4+d_3}{2c_3}+a\right)+m\frac{b_4-d_3}{2c_3},
    \label{eq:tot4}
\end{equation}
where $\hat u_{n,m}$ satisfies (T4):
\begin{equation}
    \left( \hat{u}_{n,m}-\hat{u}_{n,m+1}+\frac{1}{2} \right)\left( \hat{u}_{n+1,m}-\hat{u}_{n+1,m+1}+\frac{1}{2} \right)
    +\hat{u}_{n,m+1}-\hat{u}_{n+1,m}=0.
    \label{eq:T42011}
    \tag{T4}
\end{equation}
We recall that equation \eqref{eq:T42011} is a sGT equation, with
two nontrivial three-point generalized symmetries \cite{LeviYamilov2011}.

In the same way equation \eqref{(t4+)} is another sGT equation which is transformed 
by linear transformation of $u_{n,m}$ into the special form: 
\begin{equation}
    \left(u_{n,m}+u_{n,m+1}+1\right)\left(u_{n+1,m}+u_{n+1,m+1}+1\right)
    =2\left(u_{n,m+1}+u_{n+1,m}+1\right). 
    \label{(t4+1)}
\end{equation}
This equation is a new sGT quad-equation which possesses five-point autonomous generalized symmetries:
\begin{subequations}
    \begin{align}
        \dv{u_{n,m}}{\theta_{1}}&=(u_{n+2,m}-u_{n-2,m})(u_{n+1,m}-u_{n,m})(u_{n,m}-u_{n-1,m}),
        \label{sym_n}
        \\
        \dv{u_{n,m}}{\theta_{2}} &=\frac{[(u_{n,m+1}+u_{n,m})^2-1][(u_{n,m}+u_{n,m-1})^2-1]}{u_{n,m+1}+2u_{n,m}+u_{n,m-1}}
        (T_m-\Id)\left(\frac{1}{U_{n,m}}\right),   
        \label{sym_m}
    \end{align}
    \label{eq:symmt4plus}
\end{subequations}
where
\begin{equation}
    U_{n,m}=(u_{n,m+1}+u_{n,m-1})(u_{n,m}+u_{n,m-2})+2(u_{n,m+1}u_{n,m-1}+u_{n,m}u_{n,m-2}+1).
    \label{eq:sym_mUnm}
\end{equation}

We note that equation \eqref{(t4+)} possesses also the following 
non-autonomous three-point generalized symmetry in the $n$-direction:
\begin{equation}
    \dv{u_{n,m}}{\theta_{3}} = \left( -1 \right)^{m}
    \left(u_{n+1,m}-u_{n,m}\right)\left(u_{n,m}-u_{n-1,m}\right),
    \label{eq:symm3_t4plus}
\end{equation}
and the non-autonomous point symmetry
\begin{equation}
    \dv{u_{n,m}}{\theta_{4}} = \left( -1 \right)^{m}.
    \label{eq:symm1_t4plus}
\end{equation}
On the other side it is possible to prove that no 
three-point generalized symmetry exists in the $m$-direction.
Hence the generalized symmetry \eqref{sym_m} is the lowest order generalized symmetry
of equation \eqref{(t4+)} in the $m$-direction.

%We did not checked that these symmetries were the simplest in their directions. 

We remark that equation \eqref{(t4+)} through the following nontrival
non-invertible transformation:
\begin{equation}
    \hat u_{n,m}=u_{n,m+1}+u_{n,m},
    \label{E5E4}
\end{equation}
is transformed into \eqref{t1m}, i.e. we have that $\hat u_{n,m}$ satisfies equation \eqref{t1m}.
The symmetries (\ref{sym_n},\ref{sym_m}) are transformed into symmetries 
(\ref{sym_mN2},\ref{sym_nN2}) too, but here $\theta_1=4t_1,\ \theta_2=2t_2$. 
However, in case of \eqref{sym_n}, we can check this fact only by using discrete equation \eqref{(t4+1)}. 
This means that only common solutions of \eqref{sym_n} and \eqref{(t4+1)} 
are transformed into solutions of \eqref{sym_mN2}.

Finally, we underline that equation \eqref{sym_m} is a particular case of 
known example of integrable five-point differential-difference equation 
presented in \cite{Adler2018}.
According to this remark we have that equation \eqref{sym_nN2} is a
nontrivial modification of a known example.

\paragraph{Equation \eqref{Vol_mod1s}:}
To equation \eqref{Vol_mod1} correspond only degenerate quad-equations
in the sense of condition \eqref{eq:nondeg} and the trivial
linearizable equation \eqref{ple1}.

\paragraph{Equation \eqref{mVol3}:}
Equation \eqref{mVol3} depends on the parameter $a$.
If $a\neq0$ then it can always be rescaled to 1, hence
we can consider two independent cases: $a=1$ and $a=0$.
However, in both cases to equation \eqref{mVol3} corresponds only 
degenerate quad-equations in the sense of condition \eqref{eq:nondeg}.

%\paragraph{Equation (\ref{mVol3}, $a=0$):}
%To equation \eqref{mVol3} with $a=0$ correspond only degenerate quad-equations
%in the sense of condition \eqref{eq:nondeg} and the trivial
%linearizable equation \eqref{ple1}.

\paragraph{Equation \eqref{Vol_mod2}:}
To equation \eqref{Vol_mod1} correspond only degenerate quad-equations
in the sense of condition \eqref{eq:nondeg} and the trivial
linearizable equation \eqref{ple1}.

\subsection{List 4}

\paragraph{Equation \eqref{our1}}
To equation \eqref{our1} correspond only degenerate quad-equations
in the sense of condition \eqref{eq:nondeg}.

\paragraph{Equation \eqref{our2}}
To equation \eqref{our2} correspond only degenerate quad-equations
in the sense of condition \eqref{eq:nondeg} and equations of the form
of the linear discrete wave equations \eqref{eq:dwave}.

\paragraph{Equation \eqref{ourii}:}
To equation \eqref{ourii} correspond two nontrivial
equations:
\begin{subequations}
    \begin{align}
        u_{n+1,m+1}(u_{n,m}-u_{n,m+1})-u_{n+1,m}(u_{n,m}+u_{n,m+1})+2=0,
        \label{(39)}
        \\
        u_{n+1,m}(u_{n,m+1}-u_{n,m})-u_{n+1,m+1}(u_{n,m+1}+u_{n,m})+2=0.
        \label{39bis}
    \end{align}
    \label{eq:II13eqs}
\end{subequations}
Equation \eqref{39bis} can be reduced to equation \eqref{(39)} using
the transformation \eqref{trmm}.
Using the scaling $u_{n,m}=\sqrt{2}\hat{u}_{n,m}$ equation \eqref{(39)}
is reduced to:
\begin{equation}
    \hat{u}_{n+1,m+1}(\hat{u}_{n,m}-\hat{u}_{n,m+1})-\hat{u}_{n+1,m}(\hat{u}_{n,m}+\hat{u}_{n,m+1})+1=0,
    \label{39orig}
\end{equation}
which is a known sGT equation.
Namely equation \eqref{39orig} is equation (39) of \cite{GarifullinYamilov2012}, see \cite{GarifullinMikhailovYamilov2014} for more details.
Indeed, from \cite{GarifullinYamilov2012} we know that actually
equation \eqref{(39)} has a more complicated symmetry structure
in the $n$ direction and its general five-point symmetry is the 
following \emph{non-autonomous} one:
\begin{equation}
    \dv{u_{n,m}}{\varepsilon_{1}}=
    \left( u_{n+1,m}u_{n,m}-1 \right)\left( u_{n,m}u_{n-1,m}-1 \right)
    \left( a_{n}u_{n+2,m}-a_{n-1}u_{n-2,m} \right),
    \label{eq:symm_n39gen}
\end{equation}
where $a_{n}$ is a two-periodic function, i.e. a solution of the
difference equation $a_{n+2}=a_{n}$.
We note that $a_{n}=1$ is a solution which corresponds to
equation \eqref{ourii}.
Moreover equation \eqref{(39)} has a non-autonomous three-point
symmetry in the $m$ direction:
\begin{equation}
    \dv{u_{n,m}}{\varepsilon_{2}} = \left( -1 \right)^{n}\frac{u_{n,m+1}u_{n,m-1}+u_{n,m}^{2}}{u_{n,m+1}+u_{n,m-1}}.
    \label{eq:symm_m39gen}
\end{equation}
No autonomous three-point symmetry exists for equation \eqref{(39)}.

\subsection{List 5}

\paragraph{Equation \eqref{INB}}
To equation \eqref{INB} correspond only degenerate quad-equations
in the sense of condition \eqref{eq:nondeg}.

\paragraph{Equation \eqref{mod_INB}}
Equation \eqref{mod_INB} depends on two parameters $a$ and $b$.
If $a\neq0$ it can always be scaled to $a=1$, so for this equations
we get two different cases: when $a=1$ and when $a=0$.

\begin{description}
    \item[Case $ a=1$:]
       For this value of the parameter only degenerate or linear equations arise.
       In particular we have a discrete wave equation \eqref{eq:dwave}.
    \item[Case $ a=0$:]
       For this value of the parameter only degenerate or linear equations arise.
       In particular we have a discrete wave equation \eqref{eq:dwave}.
\end{description}

\paragraph{Equation \eqref{mikh1}}
Equation \eqref{mikh1} depends only on the parameter $a$.
If $a\neq0$ it can be scaled to $a=1$ always.
Therefore we have to consider the cases $a=1$ and $a=0$.

\begin{description}
    \item[Case $a=1$:] 
        For this value of the parameter there are two nontrivial equations:
        \begin{subequations}
            \begin{align}
                u_{n+1,m+1}(u_{n,m+1}+u_{n,m}+1)+u_{n,m}(u_{n+1,m}+1)=0,
                \label{mx13}
                \\
                u_{n,m+1}(u_{n+1,m+1}+u_{n+1,m}+1)+u_{n+1,m}(u_{n,m}+1)=0.
                \label{mx13mod}
            \end{align}
        \end{subequations}
        Up to transformation \eqref{trmm} we have that equation
        \eqref{mx13mod} reduces to equation \eqref{mx13}.
        Therefore we have only one independent equation.
        Equation \eqref{mx13} is a sGT equation and appeared as formula (58) in \cite{MikhailovXenitidis2013}.
    \item[Case $a=0$:] 
        For this value of the parameter there are six nontrivial equations:
        \begin{subequations}
            \begin{gather}
               \left(u_{{n,m}}+u_{{n,m+1}}\right)u_{{n+1,m+1}}+u_{{n,m}}u_{{n+1,m}}=0,
               \label{(6)_4}
                \\
                \left(u_{{n,m}}+u_{{n,m+1}}\right)u_{{n+1,m}}+u_{{n,m+1}}u_{{n+1,m+1}}=0,
               \label{(6)_4b}
                \\
                \left(\frac{1+\imath\sqrt {3}}{2}u_{{n,m}}+\frac{1-\imath\sqrt{3}}{2}u_{{n,m+1}}\right)u_{{n+1,m+1}}
            - u_{{n,m}}u_{{n+1,m}}=0,
               \label{(6)_4c}
                \\
                \left(\frac{1-\imath\sqrt {3}}{2}u_{{n+1,m}}+\frac{1+\imath\sqrt{3}}{2}u_{{n+1,m+1}}\right)u_{{n,m+1}}
                -u_{{n,m}}u_{{n+1,m}}=0,
               \label{(6)_4d}
                \\
                \left(\frac{1+\imath\sqrt {3}}{2}u_{{n+1,m}}+\frac{1-\imath\sqrt {3}}{2}u_{{n+1,m+1}}\right)u_{{n,m+1}}
                -u_{{n,m}}u_{{n+1,m}}=0,
               \label{(6)_4e}
                \\
                \left(\frac{1-\imath\sqrt {3}}{2}u_{{n,m}}+\frac{1+\imath\sqrt {3}}{2}u_{{n,m+1}}\right)u_{{n+1,m+1}}
                -u_{{n,m}}u_{{n+1,m}}=0.
               \label{(6)_4f}
            \end{gather}
            \label{eq:I13a0}
        \end{subequations}
        Up to transformations \eqref{trmsm}, \eqref{trmsm3} and\eqref{trmm}
        we have that all the equations in \eqref{eq:I13a0} can be reduced to \eqref{(6)_4}.
        The precise relationship between these equations is illustrated in
        Figure \ref{fig:I13transf}.
        Equation \eqref{(6)_4} is a known LT equation and it can be identified
        with (6) with $a_2=1$ of List 4 of \cite{GarifullinYamilov2012}.
        Its first integrals are:
        \begin{equation}
            W_{1} = u_{n+1,m}u_{n,m}u_{n-1,m},
            \quad
            W_{2} = d^{-n}\frac{u_{n,m}-d u_{n,m+1}}{d u_{n,m}-u_{n,m+1}},
            \quad
            d \equiv -\frac{1+\imath \sqrt{3}}{2}.
            \label{eq:intI13a0}
        \end{equation}

        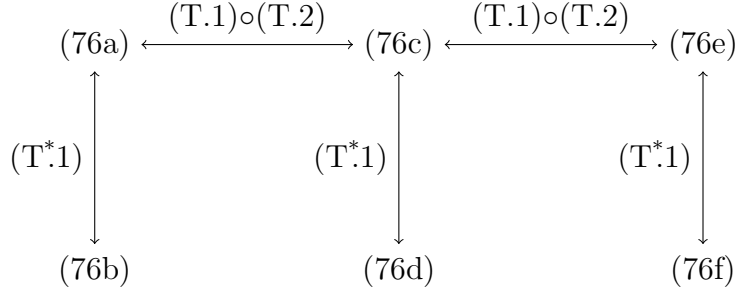
\begin{figure}[hbtp]
            \centering
            \begin{tikzpicture}
                \path (0,3) node(e1) {\eqref{(6)_4}}
                    (0,0) node(e2) {\eqref{(6)_4b}}
                    (4,3) node(e3) {\eqref{(6)_4c}}
                    (4,0) node(e4) {\eqref{(6)_4d}}
                    (8,3) node(e5) {\eqref{(6)_4e}}
                    (8,0) node(e6) {\eqref{(6)_4f}};
                    \draw[<->] (e1) -- node[left] {\eqref{trmm}} (e2);
                    \draw[<->] (e3) -- node[left] {\eqref{trmm}} (e4);
                    \draw[<->] (e5) -- node[left] {\eqref{trmm}} (e6);
                    \draw[<->] (e1) -- node[above] {\eqref{trmsm}$\circ$\eqref{trmsm3}} (e3);
                    \draw[<->] (e3) -- node[above] {\eqref{trmsm}$\circ$\eqref{trmsm3}} (e5);
                %\draw (0,6) node[circle,draw] {\eqref{(6)_4}}
                %--(0,4) node[circle,draw] {\eqref{(6)_4c}};
            \end{tikzpicture}
            \caption{The relationship between the equations \eqref{eq:I13a0}.
            By \eqref{trmsm}$\circ$\eqref{trmsm3} we mean the composition of the
            two transformations.}
            \label{fig:I13transf} 
        \end{figure}

\end{description}

\paragraph{Equation \eqref{eq1}:}
To equation \eqref{eq1} correspond only degenerate quad-equations
in the sense of condition \eqref{eq:nondeg}.

\paragraph{Equation \eqref{eq311}:}
To equation \eqref{eq311} correspond three nontrivial 
quad-equations\footnote{By direct computations several sub-cases of
    equations \eqref{eq:eq311tot} are found. We omit them since they
are not independent cases.}:
\begin{subequations}
    \begin{gather}
        \left(u_{n+1,m+1}+\frac{c_4}{a_4}\right)\left(u_{n,m+1}+\frac{c_4}{a_4}\right)-\left(u_{n+1,m}-\frac{d_3}{a_4}\right)\left(u_{n,m}-\frac{d_3}{a_4}\right)=0,\quad a_4\neq0,
        \label{l4_1}
        \\
        \left(u_{n+1,m+1}+u_{n+1,m}+\frac{c_4}{c_3}\right)\left(u_{n,m+1}+u_{n,m}+\frac{c_4}{c_3}\right)=\frac{c_4^2-c_3d_4}{c_3^2},
        \label{l4_5_1} \quad c_3\neq0,\, c_4^2-c_3d_4\neq0,
        \\
        \left(u_{n+1,m+1}-u_{n+1,m}-\frac{c_4}{c_3}\right)
        \left(u_{n,m+1}-u_{n,m}-\frac{c_4}{c_3}\right)=
        \frac{c_4^2+c_3d_4}{c_3^2},\quad \label{l4_5} c_3\neq0,\, c_4^2+c_3d_4\neq 0.
    \end{gather}
    \label{eq:eq311tot}
\end{subequations}

Equation \eqref{l4_1} is a LT equation. 
Using the transformation 
\begin{equation}
    \hat u_{n,m}=u_{n,m}+\frac{c_4}{a_4}
    \label{eq:T311_1}
\end{equation}
we can bring it into equation (1) of List 4 of \cite{GarifullinYamilov2012}.
Its first integrals are given by:
\begin{equation}
    W_{1} = (u_{n+2, m}-u_{n, m})(u_{n+1, m}-u_{n-1, m}),
    \quad
    W_{2}=
    \left( \frac {a_4u_{{n,m+1}}+c_4}{a_4u_{{n,m}}- d_3} \right) ^{ \left( -1 \right) ^{n}}.    
    \label{eq:int311a}
\end{equation}

Equation \eqref{l4_5_1} under the non-autonomous transformation
\begin{equation}
    \hat u_{n,m}=(-1)^m\left(u_{n,m}+\frac{c_4}{2c_3}\right),
    \label{eq:T311_2}
\end{equation}
is mapped into a particular case of equation \eqref{l4_5}. 
Equation \eqref{l4_5} itself is of LT. It can be identified
with equation (5) of List 4 of \cite{GarifullinYamilov2012}.
This implies that its first integrals are given by:
\begin{equation}
    W_{1} = u_{n+1,m}-u_{n-1,m},
    \quad
    W_{2} = \left( -1 \right)^{n}
    \frac{u_{n,m+1}-u_{n,m}+\kappa-\delta}{u_{n,m+1}-u_{n,m}+\kappa+\delta},
    \,
    \kappa=\frac{c_{4}}{c_{3}},
    \,
    \delta=\frac{c_4^2+c_3d_4}{c_3^2}.
    \label{eq:int311}
\end{equation}

%\eq{u_{n+1,m+1}(u_{n,m}+u_{n,m+1})+u_{n+1,m}u_{n,m}=0.}

%\\
\paragraph{Equation \eqref{INB1}:}
To equation \eqref{INB1} correspond only degenerate quad-equations
in the sense of condition \eqref{eq:nondeg} and the 
trivial linearizable equation \eqref{ple1}.

\paragraph{Equation \eqref{INB3}:}
Equation \eqref{INB3} depends on the parameter $a$.
If $a\neq0$ then it can always be rescaled to 1, hence
we can consider two independent cases: $a=1$ and $a=0$.
However in both cases to equation \eqref{INB1} correspond only 
degenerate quad-equations in the sense of condition \eqref{eq:nondeg} or
linear equations of the form of the discrete wave equation
\eqref{eq:dwave}.

\paragraph{Equation  \eqref{MX2}:}
Equation \eqref{MX2} depends on the parameter $a$.
If $a\neq0$ then it can always be rescaled to 1, hence
we can consider two independent cases: $a=1$ and $a=0$.

To equation \eqref{MX2} when $a=1$ correspond four nontrivial
equations:
\begin{subequations}
    \begin{align}
        u_{{n,m}}u_{{n+1,m}}+u_{{n,m+1}}u_{{n+1,m}}+u_{{n,m+1}}u_{{n+1,m+1}} &=1,
        \label{shl}
        \\
        u_{{n,m}}u_{{n+1,m}}-u_{{n,m+1}}u_{{n+1,m}}+u_{{n,m+1}}u_{{n+1,m+1}} &=1,
        \label{eq:II11a1b}
        \\
        u_{{n,m}}u_{{n+1,m}}+u_{{n,m}}u_{{n+1,m+1}}+u_{{n,m+1}}u_{{n+1,m+1}} &=1,
        \label{eq:II11a1c}
        \\
        u_{{n,m}}u_{{n+1,m}}-u_{{n,m}}u_{{n+1,m+1}}+u_{{n,m+1}}u_{{n+1,m+1}} &=1.
        \label{eq:II11a1d}
    \end{align}
    \label{eq:II11a1}
\end{subequations}
Up to transformations (\ref{trmm},\ref{trmsm}) we have that all the
equations in \eqref{eq:II11a1} can be reduced to \eqref{shl},
see Figure \ref{fig:II11a1transf}.
Equation \eqref{shl} is a known equation.
It was introduced in \cite{HernandezLeviScimiterna2013},
while in \cite{ScimiternaHayLevi2014} it was proved that it
is a sGT equation\footnote{Therein it is given by equation (1.9).}.

\begin{figure}[hbtp]
    \centering
    \begin{tikzpicture}
        \path (0,3) node(e1) {\eqref{shl}}
            (4,3) node(e2) {\eqref{eq:II11a1b}}
            (0,0) node(e3) {\eqref{eq:II11a1c}}
            (4,0) node(e4) {\eqref{eq:II11a1d}};
            \draw[<->] (e1) -- node[above] {\eqref{trmsm}} (e2);
            \draw[<->] (e1) -- node[left] {\eqref{trmm}} (e3);
            \draw[<->] (e1) -- node[right] {\eqref{trmm}$\circ$\eqref{trmsm}} (e4);
    \end{tikzpicture}
    \caption{The relationship between the equations \eqref{eq:II11a1}.
    By \eqref{trmm}$\circ$\eqref{trmsm} we mean the composition of the
    two transformations.}
    \label{fig:II11a1transf} 
\end{figure}
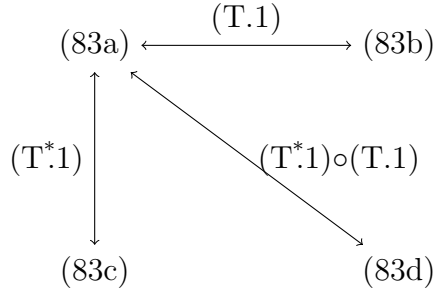

%\paragraph{Equation  (\ref{MX2}, $a=0$):}

To equation \eqref{MX2} when $a=0$ correspond twelve nontrivial 
equations.
Six of them are those of presented in formula \eqref{eq:I13a0}, but we
also have the following six other equations:
\begin{subequations}
    \begin{gather}
        u_{{n,m+1}}u_{{n+1,m+1}}+u_{{n+1,m}} \left( u_{{n,m}}-u_{{n,m+1}} \right)=0,        
        \label{eq:II11a0_1}
        \\
        \left( u_{{n,m}}-u_{{n,m+1}} \right) u_{{n+1,m+1}}-u_{{n,m}}u_{{n+1,m}}=0,
        \label{eq:II11a0_2}
        \\
        \left(  \frac{1-\imath \sqrt {3}}{2} u_{{n,m+1}}+u_{{n,m}} \right) u_{{n+1,m}}
        -\frac{1+\imath \sqrt {3}}{2} u_{{n,m+1}}u_{{n+1,m+1}}=0,
        \label{eq:II11a0_3}
        \\
        \left( u_{{n,m+1}}+ \frac{1-\imath \sqrt {3}}{2} u_{{n,m}} \right) u_{{n+1,m+1}}
        -\frac{1+\imath \sqrt {3}}{2} u_{{n+1,m}}u_{{n,m}}=0,
        \label{eq:II11a0_6}
        \\
        \left(  \frac{1+\imath \sqrt {3}}{2} u_{{n,m+1}}+u_{{n,m}} \right) u_{{n+1,m}}        
        -\frac{1-\imath \sqrt {3}}{2} u_{{n,m+1}}u_{{n+1,m+1}}=0,
        \label{eq:II11a0_4}
        \\
        \left(  \frac{1-\imath \sqrt {3}}{2} u_{{n,m+1}}+u_{{n,m}} \right) u_{{n+1,m+1}}
        + \frac{1+\imath \sqrt {3}}{2} u_{{n+1,m}}u_{{n,m}}=0.
        \label{eq:II11a0_5}
    \end{gather}
    \label{eq:II11a0}
\end{subequations}
Using the transformations (\ref{trmm},\ref{trmsm},\ref{trmsm3}) we have 
that the only independent equation is \eqref{(6)_4}.
See Figure \ref{fig:II11a0transf} for the details about the needed transformations.
As discussed above equation \eqref{(6)_4} is a LT equation with 
known first integrals \eqref{eq:intI13a0}.

\begin{figure}[hbtp]
    \centering
    \begin{tikzpicture}
        \path (0,3) node(e1) {\eqref{eq:II11a0_1}}
            (0,0) node(e2) {\eqref{eq:II11a0_2}}
            (4,3) node(e3) {\eqref{eq:II11a0_3}}
            (8,3) node(e4) {\eqref{eq:II11a0_4}}
            (8,0) node(e5) {\eqref{eq:II11a0_5}}
            (4,0) node(e6) {\eqref{eq:II11a0_6}}
            (4,-2) node (e0) {\eqref{(6)_4}};
            \draw[<->] (e1) -- node[left] {\eqref{trmm}} (e2);
            \draw[<->] (e3) -- node[left] {\eqref{trmm}} (e6);
            \draw[<->] (e4) -- node[left] {\eqref{trmm}} (e5);
            \draw[<->] (e1) -- node[above] {\eqref{trmsm}$\circ$\eqref{trmsm3}} (e3);
            \draw[<->] (e3) -- node[above] {\eqref{trmsm}$\circ$\eqref{trmsm3}} (e4);
        %\draw (0,6) node[circle,draw] {\eqref{(6)_4}}
        %--(0,4) node[circle,draw] {\eqref{(6)_4c}};
            \draw[<->,dotted] (e2) ..  controls (2,-1.5) .. node[dotted,above] {\eqref{trmsm}} (e0);
    \end{tikzpicture}
    \caption{The relationship between the equations \eqref{eq:II11a0} and \eqref{(6)_4}.
    By \eqref{trmsm}$\circ$\eqref{trmsm3} we mean the composition of the
    two transformations.}
    \label{fig:II11a0transf} 
\end{figure}
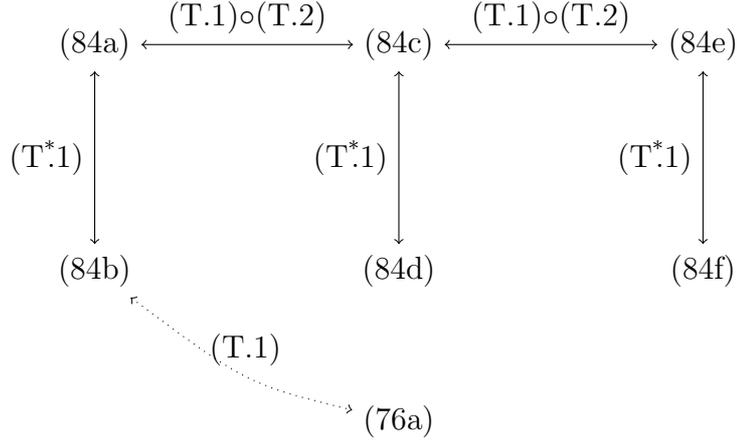

\paragraph{Equation \eqref{INB2}:}
To equation \eqref{INB2} correspond only degenerate quad-equations
in the sense of condition \eqref{eq:nondeg} or the trivial linearizable
equation \eqref{ple1}.

\subsection{List 6}

\paragraph{Equation \eqref{seva}:}
To equation \eqref{seva} correspond only degenerate quad-equations
in the sense of condition \eqref{eq:nondeg}.

\paragraph{Equation \eqref{rat}:}
To equation \eqref{rat} correspond only degenerate quad-equations
in the sense of condition \eqref{eq:nondeg}.

\paragraph{Equation \eqref{sroot}:}
Equation \eqref{sroot} is not rational, so we are not going to
consider it as we are dealing with multi-affine quad-equations
\eqref{Fh}.
It can be remarked that equation \eqref{sroot} has as a rational form, 
but it is quadratic in $u_{k\pm2}$, so it is outside the
main class of differential-difference equations.

\paragraph{Equation \eqref{SK2}:}
To equation \eqref{SK2} correspond only degenerate quad-equations
in the sense of condition \eqref{eq:nondeg} or equations of the form
of the linear discrete wave equation \eqref{eq:dwave}.

\section{Summary and outlook}
\label{sec:summary}

In this paper we have constructed all the possible autonomous
quad-equations \eqref{eq:quadgen} admitting, as generalized
symmetries, some known five-point differential-difference equations
belonging to a class recently classified in
\cite{GarifullinYamilovLevi2016,GarifullinYamilovLevi2018}.

In section \ref{sec:method} we gave a general result on all the
quad-equations admitting a generalized symmetry of the form \eqref{e1}, resulting
in the simplified form \eqref{Fh}.
Then in section \ref{sec:results} we considered all the explicit examples
of five-point differential-difference equations and found all the corresponding 
quad-equations.
We discussed the properties of these equations, highlighting which are
 LT and which are sGT equations.
We present a summary of these properties in Table \ref{tab:summary}.
Referring to the table, we have that the majority of the 
differential-difference equations of the form \eqref{eq:4thgen} as classified
in \cite{GarifullinYamilovLevi2016,GarifullinYamilovLevi2018} give raise to
degenerate or trivial quad-equations.
However eight of them give also rise to LT equations.
Most of these LT equations were known, but we also obtained two new
LT equations, namely equations \eqref{nDar} and \eqref{(L4_3m)}. 
Moreover, as expected, sGT equations are even rarer, appearing only in
in four examples.
Among these sGT equations there are two new examples, namely, equations \eqref{t1m} and \eqref{(t4+1)}.
In this paper we limited ourselves to present the result of the
algebraic entropy test for  equation \eqref{t1m}, which suggests integrability.
More details and generalizations, including $L-A$ pairs, 
on equation \eqref{t1m} can be found in \cite{GarifullinYamilov2018}.

As we remarked at the beginning of this section, in this paper we only
dealt with autonomous differential-difference equations and autonomous
quad-equations.
We are now working on weakening this condition for the fully discrete
equations to present a classification of non-autonomous quad-equations
admitting the differential-difference equations found in 
\cite{GarifullinYamilovLevi2016,GarifullinYamilovLevi2018} as five-point
generalized symmetries.
This classification can be performed with the method presented in
\cite{GarifullinYamilov2015} and applied to some known autonomous and
non-autonomous three-point differential-difference equations of
Volterra and Toda type.

\begin{table}
    \centering
    \begin{tabular}{cccc}
        \toprule
        \multicolumn{2}{c}{\textbf{Class I}} & \multicolumn{2}{c}{\textbf{Class 2}}
        \\
        \midrule
        Equations & Properties & Equations & Properties
        \\
        \midrule
        \eqref{Vol} & Trivial & \eqref{eq1ii} & Trivial
        \\
        \eqref{Vol0} & LT & \eqref{eq2ii} & LT\textsuperscript{*}
        \\
        \eqref{Vol1} & Trivial & \eqref{Vol1s} & Trivial
        \\
        \eqref{Vol_mod} & Trivial & (\ref{mVol2},$a=1$) & sGT\textsuperscript{*}, LT\textsuperscript{*}
        \\
        (\ref{Vol_mod1}, $a=1$) & Trivial & (\ref{mVol2},$a=0$) & Trivial
        \\
        (\ref{Vol_mod1}, $a=0$) & Trivial & \eqref{Volz} & sGT\textsuperscript{*}
        \\
        \eqref{Vol2} & LT & \eqref{Vol_mod1s} & Trivial
        \\
        (\ref{Bur2}, $a\neq0,\pm1,b=d=0,c\neq0$) & Trivial & (\ref{mVol3}, $a=1$) & Trivial
        \\
        (\ref{Bur2}, $a=1,b=c=d=0$) & LT  & (\ref{mVol3}, $a=0$) & Trivial
        \\
        (\ref{Bur2}, $a=-1,b=1,c=0$) & Trivial & \eqref{Vol_mod2} & Trivial
        \\
        (\ref{Bur2}, $a=-1,b=0$) & LT & \eqref{INB1} & Trivial
        \\
        \eqref{Bur} & Trivial & (\ref{INB3}, $a=1$) & Trivial
        \\
        \eqref{our1} & Trival & (\ref{INB3}, $a=0$) & Trivial
        \\
        \eqref{our2} & Trivial & (\ref{MX2}, $a=1$) & sGT
        \\
        \eqref{INB} & Trivial & (\ref{MX2}, $a=0$) & LT
        \\
        (\ref{mod_INB}, $a=1$) & Trivial & \eqref{INB2} & Trivial
        \\
        (\ref{mod_INB}, $a=0$) & Trivial & \eqref{ourii} & sGT
        \\
        (\ref{mikh1}, $a=1$) & sGT & \eqref{SK2} & Trivial
        \\
        (\ref{mikh1}, $a=0$) & LT & 
        \\
        \eqref{eq1} & Trivial & 
        \\
        \eqref{seva} & Trivial
        \\
        \eqref{rat} & Trivial
        \\
        \eqref{sroot} & Not rational
        \\
        \eqref{eq311} & LT
        \\
        \bottomrule
    \end{tabular}
    \caption{Summary of the properties of the corresponding discrete quad-equations.
    With \textsuperscript{*} we underline the presence of new quad-equations.}
    \label{tab:summary}
\end{table}

\section*{Acknowledgment}

GG is supported by the Australian Research Council through Nalini Joshi's
Australian Laureate Fellowship grant FL120100094.

\clearpage

\bibliographystyle{plain}
\bibliography{bibliography}

\end{document}